\documentclass[10pt,journal,compsoc]{IEEEtran}
\include{micros}
\usepackage{times,amsmath,epsfig}
\usepackage{epstopdf}
\usepackage{multirow}
\usepackage{color}
\usepackage{amssymb}
\setcounter{tocdepth}{3}
\usepackage{graphicx}
\usepackage{amsthm}
\usepackage{cite}
\usepackage{url}
\usepackage{verbatim}
\usepackage[normalem]{ulem}
\usepackage{caption}
\usepackage{enumerate}
\usepackage{algorithmic}
\usepackage[table]{xcolor}
\usepackage{tabularx} 
\usepackage{ragged2e} 
\usepackage{booktabs} 
\usepackage{threeparttable}
\usepackage{float}
\usepackage[ruled,linesnumbered]{algorithm2e}
\usepackage{amssymb}
\usepackage{pifont}
\usepackage{multirow} 
\usepackage{subfig}
\usepackage{graphicx}



\newtheorem{definition}{Definition}
\newtheorem{theorem}{Theorem}
\newtheorem{remark}{Remark}
\hyphenation{op-tical net-works semi-conduc-tor IEEE-Xplore}


\begin{document}

\title{An Efficient and Multi-Private Key Secure Aggregation Scheme for Federated Learning}

\author{Xue~Yang,
        Zifeng~Liu, 
        Xiaohu~Tang,
        Rongxing~Lu,~\IEEEmembership{Fellow,~IEEE},
        and~Bo~Liu     
\thanks{X. Yang, Z. Liu, and X. Tang are with the Information Coding and Transmission Key Laboratory of Sichuan Province, Southwest Jiaotong University, Chengdu, China. E-mail: \{xueyang, xhutang\}@swjtu.edu.cn, lzf@my.swjtu.edu.cn.}
\thanks{R. Lu is with the Canadian Institute of Cybersecurity, Faculty of Computer Science, University of New Brunswick,  Fredericton E3B 5A3, Canada. E-mail: rlu1@unb.ca.}
\thanks{B. Liu is with the DBAPPSecurity Ltd, Hangzhou, China. E-mail: bo.liu@dbappsecurity.com.cn.}
\thanks{Corresponding author: Xue~Yang (E-mail: xueyang@swjtu.edu.cn).}}

\markboth{Journal of \LaTeX\ Class Files,~Vol.~14, No.~8, August~2021}%
{Shell \MakeLowercase{\textit{et al.}}: A Sample Article Using IEEEtran.cls for IEEE Journals}







\IEEEtitleabstractindextext{%
\begin{abstract}
In light of the emergence of privacy breaches in federated learning,  secure aggregation protocols, which mainly adopt either homomorphic encryption or threshold secret sharing techniques, have been extensively developed for federated learning to preserve each client's local training data privacy. Nevertheless, many existing protocols suffer from several shortcomings such as dependence on a trusted third party, vulnerability to corrupted clients, low efficiency, and the trade-off between security and dropout-resiliency guarantee, etc. To deal with these shortcomings, we propose an efficient and multi-private key secure aggregation scheme for federated learning. Specifically, we skillfully design a multi-private key secure aggregation protocol that achieves homomorphic addition operation, with two important benefits: 1) both the server and each client can freely select public and private keys without introducing a trusted third party, and 2) the plaintext space is relatively large, making it more suitable for deep models. Besides, for dealing with the high dimensional deep model parameter, we introduce a super-increasing sequence to compress multi-dimensional data into one dimension, which greatly reduces encryption and decryption times as well as communication for ciphertext transmission. Detailed security analyses show that our proposed scheme can achieve semantic security of both individual local gradients and the aggregated result while achieving optimal robustness in tolerating client collusion and dropped clients. Extensive simulations demonstrate that the accuracy of our scheme is almost the same as the non-private approach, while the efficiency of our scheme is much better than the state-of-the-art homomorphic encryption-based secure aggregation schemes. More importantly, the efficiency advantages of our scheme will become increasingly prominent as the number of model parameters increases.
\end{abstract}

\begin{IEEEkeywords}
Federated learning, multi-private key secure aggregation, privacy-preserving, robustness against client collusion, dropout-resiliency guarantee.
\end{IEEEkeywords}}

\maketitle

\IEEEdisplaynontitleabstractindextext
\IEEEpeerreviewmaketitle

\section{Introduction}
\IEEEPARstart{T}{he} continued emergence of privacy leaks and data abuse has hindered the flourishing of traditional centralized training, which collects a vast amount of training data from distributed data providers.
As highlighted in \cite{YangLCT19}, data providers are no longer comfortable uploading local data due to concerns over personal privacy and data control rights. Obviously, it is challenging to train a high-performance deep model without significant training data. To address privacy concerns, \emph{federated learning} \cite{KonecnyMYRSB16}, as one of the most important research aspects of private computing \cite{GaiWZXZ19}, has recently emerged. Federated learning is a distributed framework where many data providers (also called clients) collaboratively train a shared global model under the orchestration of a central server. During the phase of training, the training data are maintained locally, and clients only send the local gradient to the server. This training framework of federated learning addresses data abuse and significantly improves the data privacy of clients. As a result, the research on federated learning has grown significantly in recent years. 

Regrettably, recent works \cite{ShokriSSS17, HayesMDC19, ZhuLH19, GeipingBD020} have revealed that the adversary may still access some private information of training data or even reconstruct training data from interacted model parameters or local gradient. Specifically, \cite{ShokriSSS17} and \cite{HayesMDC19} investigate membership inference attacks based on the model parameter to infer if a particular data record was included in the training dataset. \cite{ZhuLH19} and \cite{GeipingBD020} demonstrate that adversaries may reconstruct training data from the local gradient. To improve the security of federated learning, many privacy-preserving federated learning schemes \cite{PhongAHWM18, HaoLXLY19, ChaiWCY21, ZhangLX00020, ZhangFWZC20, BonawitzIKMMPRS17, DongCSW20, XuLL0L20} have been presented. 
These schemes commonly employ two cryptographic techniques, i.e., homomorphic encryption \cite{Paillier99} and $(t, n)$-threshold secret sharing \cite{Shamir79}, to achieve secure aggregation and ensure privacy preservation of the local gradient for each client. However, these schemes suffer from several drawbacks that hinder their practical implementation: 
\begin{itemize}
    \item The schemes \cite{PhongAHWM18, PhongA0WM17, HaoLXLY19, ChaiWCY21, ZhangLX00020, ZhangFWZC20} that utilize the homomorphic encryption technique require all clients to share a pair of public and private keys generated by a trusted third party, and encrypt their local gradient with the same public key. Obviously, if an adversary compromises a client, the system is no longer secure, and meanwhile, no collusion is allowed in these schemes. Additionally, these schemes incur significant computational costs and communication overhead.
       
    \item The schemes \cite{BonawitzIKMMPRS17, DongCSW20, XuLL0L20} that use the $(t, n)$-threshold secret sharing technique have storage costs for private double-masks proportional to the number of clients. Moreover, for each iteration, all clients must re-select masked secrets and interactively perform the corresponding secret-sharing operation, resulting in additional interactive time and communication overhead. 
    Besides, there exists a trade-off between security (i.e.,  robustness against client collusion) and the dropout-resiliency guarantee, i.e., these secret sharing-based schemes increase the privacy guarantee by reducing the dropout-resiliency guarantee, and vice versa. Furthermore, these schemes do not consider preserving the privacy of the aggregated result or the well-trained model. Therefore, any adversary, in addition to the server, can also obtain the aggregated result once they have the transmitted data.
    
\end{itemize}

Hence, designing an efficient secure aggregation scheme that allows for client self-selected keys, robustness against collusion attacks, and tolerance for dropped clients remains a challenge. To overcome this challenge, we propose an efficient and multi-private key secure aggregation scheme for federated learning that supports homomorphic encryption with multi-private keys, decryption limitation for the aggregated result, robustness against collusion attacks, and tolerance for dropped clients The main contributions of this paper are threefold:

\begin{itemize}
    \item First, we address the privacy concerns of both the local gradient for each client and the aggregated result by skillfully designing a multi-private key secure aggregation protocol that does not require a trusted third party. This protocol allows the server and each client to select a pair of public and private keys freely. 
    Specifically, each client encrypts its local gradient with its own public key to ensure that the leakage of a particular client's private key does not compromise other clients' privacy. At the same time, even if up to $N-2$ clients collude with the server, they cannot access any information other than the aggregated result of the remaining two clients.
    Finally, only the server can decrypt the aggregated result with its private key.

    \item Second, to ensure efficiency, we employ a super-increasing sequence to greatly reduce the computational costs and communication overhead. Instead of encrypting each dimension of the multidimensional model parameters separately,  our scheme compresses multidimensional gradients into one dimension using this sequence before encryption. Obviously, this design significantly decreases the number of encryption and decryption operations (i.e., computational costs) and the number of transmitted ciphertexts (i.e., communication overhead). However, introducing the super-increasing sequence increases the message space length considerably. To this end, we skillfully design the encryption operation of our multi-private key secure aggregation scheme to support encrypting messages of large lengths.

    \item Detailed security analyses demonstrate that our scheme can ensure the semantic security of the local gradient for each client and the aggregated result, as well as robustness against collusion between the server and up to $N-2$ clients, and tolerance for up to $N-2$ dropped clients. Extensive experiments demonstrate our scheme exhibits significantly better communication and computational efficiency than the related secure aggregation work. Besides, the accuracy of our scheme is almost identical to the most popular federated learning scheme that does not consider privacy preservation. 
\end{itemize}

The rest of this paper is structured as follows. We outline  models and design goals in Section \ref{sec:model}. Then, we present our scheme in Section \ref{sec:scheme}, followed by its security analysis and performance evaluation in Sections \ref{sec:security} and \ref{sec:perfor}, respectively. Related work is discussed in Section \ref{sec:related}. Finally, we conclude our work in Section \ref{sec:conc}.

\section{Models and design goals}\label{sec:model}
This section commences by outlining the FL system model and corresponding threat model employed in this paper, followed by the identification of our design goals. Before proceeding with the detailed explanation, we provide a description of the notations used in the proposed scheme in Table Table \ref{tab:table1}.

\setlength{\arrayrulewidth}{0.3mm}
\setlength{\tabcolsep}{12pt}
\renewcommand{\arraystretch}{1.3}
{\rowcolors{1}{gray!20}{white}
\begin{table}[!t]
	\caption{Notation Used in the Proposed Scheme\label{tab:table1}}
	\centering
	\begin{tabular}{l l}
		\hline
		Notation & Description\\
		\hline
        $\{\mathcal{C}_1, \mathcal{C}_2, \ldots, \mathcal{C}_N\}$ & All clients in the system \\
        $\kappa_{1}$& Security parameter of system\\
        $\kappa_{2}$&  The bit length of the symmetric key\\
		$(p,g,q,\mathbb{G})$ & Parameters of system \\
        $H(\cdot)$& Cryptographic hash function\\
        $\mathbf{a}$ & A super-increasing vector\\
        $(\alpha, \beta=g^{\alpha})$ & Private and public keys of the server \\
        $(sk_i, pk_i=g^{sk_i})$& Private and public keys of $\mathcal{C}_i$ \\
        $key_{i}$& Symmetric key of the client $\mathcal{C}_i$\\
        $pk_{S}$& Aggregated public key\\
		$(r_{i1},r_{i2})$& Random number chosen by the client $\mathcal{C}_i$\\
        $W$& Global model parameters\\
		$\nabla_i$& Local gradients of $\mathcal{C}_i$ \\
		$\widehat{\nabla}_{i}$& Compressed gradients of $\mathcal{C}_i$ \\
		$E(\widehat{\nabla}_{i})=(E_{i1}, E_{i2})$& Encrypted gradient of $\mathcal{C}_i$ \\
        $W_{i}^{*}$& Encrypted global model parameter $W$\\
        $R$& A challenge generated by the server\\
        $(E_{Agg}, d, T)$& Aggregated result\\
		$\nabla$& Aggregated gradient \\
		\hline
	\end{tabular}
\end{table}

\subsection{System Model}\label{subsec:system_model}
As demonstrated in the majority of federated learning frameworks (e.g., \cite{0003FFSTXL22, HaoLXLY19, PachecoFSA18}), federated learning is essentially a distributed machine learning framework that enables clients to collaboratively train a global model under the orchestration of a central server, without exchanging the local training data of each client. 
Therefore, our system comprises two types of entities: a server and a number of clients $\{\mathcal{C}_{1}, \mathcal{C}_{2}, \ldots, \mathcal{C}_{N}\}$, each of which is responsible for executing the following operations: 
\begin{itemize}
    \item \emph{Server}: The server is accountable for aggregating the local gradients received from clients, updating the global model, and broadcasting the updated global model to clients for the next iteration.     
    \item \emph{Clients}: Each client $\mathcal{C}_{i}$ ($i\in [1, N]$) conducts local model training using the global model received from the server and the local training dataset to derive local gradients. Next, $\mathcal{C}_{i}$ uploads the local gradients to the server for aggregation. It is noteworthy that, due to the relatively large number of clients, we cannot ensure that all clients can participate in every iteration, particularly for mobile or IoT devices with unreliable connections. As a result, dropped clients are common \cite{LiSTS20}.    
\end{itemize}

Additionally, as depicted in Fig. \ref{fig_system}, the conventional framework of federated learning facilitates the server and clients to collaboratively execute the following two phases until the model converges:
\begin{enumerate}
    \item \emph{Local model training}: Initially, the server transmits the current global model parameter $W$ to all clients. Next, each client $\mathcal{C}_{i}$ calculates local gradients $\nabla_{i}$ using the received $W$ and local training data $\mathcal{D}_{i}$ by applying the stochastic gradient descent (SGD) algorithm. After that, $\mathcal{C}_{i}$ uploads the computed $\nabla_{i}$ to the server for the model update.
 
    \item \emph{Global model aggregation and update}: Initially, the server conducts weighted average aggregation on the received local gradients. Subsequently, the server updates the current model $W$ for the next iteration. Specifically, given the learning rate $\eta$, the aggregation and update operations are expressed as:    
    \begin{equation}\label{eq:aggre}
        W \Leftarrow \underbrace{W-\eta \underbrace{\sum_{i\in S_{a}} \frac{|\mathcal{D}_{i}|}{|\mathcal{D}|}\nabla_{i}}_{\mathrm{Aggregation}}}_{\mathrm{Update}}
    \end{equation}
\end{enumerate}

\begin{figure*}[!t]
	\centering
	\includegraphics[width=7in]{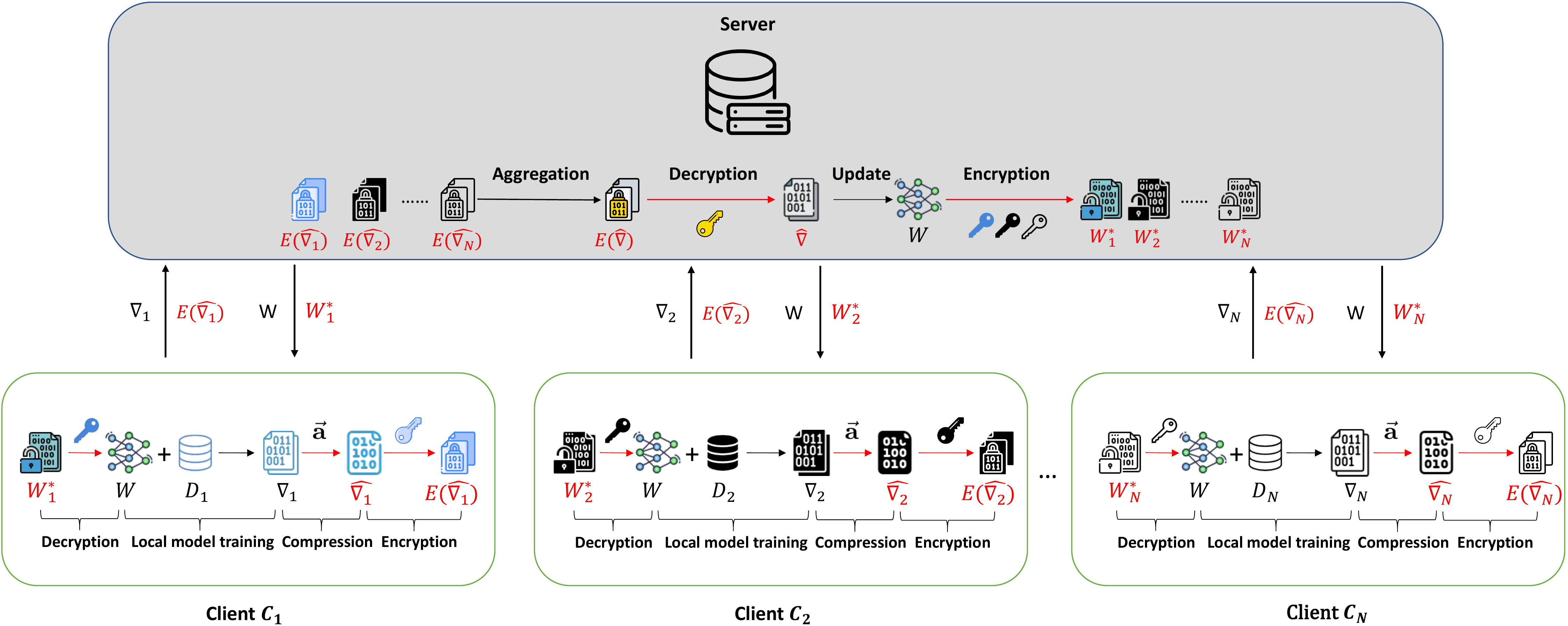}
	\caption{System model under consideration. Please note that the steps and parameters illustrated in black denote traditional federated learning, while those highlighted in red represent additional elements associated with privacy preservation.
 }\label{fig_system}
\end{figure*}

\subsection{Threat Model}\label{subsec:threat}
This paper considers two types of adversaries in the threat model, namely internal adversaries and external eavesdroppers. The attack capabilities corresponding to each type are defined as follows:
\begin{itemize}
    \item \emph{Internal adversary}: An internal adversary could either be the server or a client. Analogous to most privacy-preserving federated learning mechanisms \cite{XuBZAL19, XuLL0L20}, both the server and clients are assumed to be honest but curious, which means that they honestly follow the underlying scheme, but they are curious about the private training data of each client. More precisely, the server or client aims to extract the local gradients of a specific client to obtain the corresponding local training data through private attacks like reconstruction attacks \cite{ZhuLH19}. 
    Furthermore, to augment the attack capability, the server may collude with some clients to jointly obtain the local gradients of other clients. 
    Concretely, these colluded clients may share some private values with the hope of assisting in the acquisition of the local training data of a specific non-colluded client.

    \item \emph{External eavesdropper}: An external eavesdropper endeavors to acquire the local training data of a specific client by intercepting the contents exchanged between the server and clients.
\end{itemize}

\subsection{Design Goals}\label{subsec:design}
Drawing from the above system model and threat model, our proposed scheme aims to achieve the following objectives:
\begin{itemize}
    \item \emph{Privacy-preservation}: Neither the server nor the clients can access the private training data of any particular client, even if they collude. In particular, our scheme guarantees that neither the server nor the clients can acquire the local gradients of any specific client, even if some curious clients collude with the server. 
    Furthermore, another goal is to preserve the confidentiality of the aggregated result, i.e., the global model.
    More specifically, our scheme ensures that only the server can decrypt the aggregated gradient and obtain the well-trained model. Moreover,  external eavesdroppers cannot extract more private information than internal attackers.  
    
    \item \emph{Efficiency and robustness for tolerating dropped clients}:  In addition to the privacy-preserving requirement, efficiency, which includes computation, communication, and model accuracy, is one of the most critical metrics for federated learning \cite{WangWL19}, particularly for mobile or IoT devices\cite{KhanSHHH21}. 
    In reality, the implementation of privacy-preserving technology will undoubtedly lead to a loss in efficiency. Therefore, minimizing efficiency loss as much as possible is also one of our design goals. Additionally, as shown in \ref{subsec:system_model}, the scheme's robustness in tolerating dropped clients must be taken into account due to the system heterogeneity in federated learning.    
\end{itemize}

\section{Proposed scheme}\label{sec:scheme}
This section provides a detailed description of our secure aggregation scheme for federated learning. Our secure aggregation scheme mainly comprises three steps, in accordance with the conventional workflow of federated learning and the privacy-preserving objectives. These steps are as follows: 1) System parameter initialization, 2) Local model training and encryption, and 3) Model aggregation and decryption, which are highlighted in red in Fig. \ref{fig_system}. 
In essence, before commencing federated model training, the system parameters must be generated. Subsequently, each client $\mathcal{C}_{i}$ undertakes the local training under the orchestration of the server. 
More specifically, given the global model $W$, $\mathcal{C}_{i}$ computes the local gradient $\nabla_{i}$, and encrypts it using its private key $sk_{i}$. Subsequently, $\mathcal{C}_{i}$ uploads the encrypted gradient, denoted as $E(\nabla_{i})$, to the server. 
 The server then aggregates the received $E(\nabla_{i})$ and decrypts the aggregation result with the assistance of online clients. Following this, the server uploads the current model $W$ according to Eq. \eqref{eq:aggre}. The overall flow of our algorithm is demonstrated in Algorithm \ref{alg:alg2}, where the detailed operation of each step will be elucidated in the following sections.

\begin{algorithm}[htbp]
	\caption{The overall flow of our scheme}\label{alg:alg2}
		\textbf{System initialization: }{
                \begin{itemize}
                    \item \textbf{Server-side}: The server performs the system parameter generation function $\mathbf{SysGen}(\kappa_{1}, \kappa_{2})\rightarrow (p, q, \mathbb{G}, g, \beta, \mathbf{a}, H(\cdot))$\;
                    \item  \textbf{Server-side $\rightarrow$ Client-side}: The server sends the system parameter $\mathrm{SysPara}=(p, q, \mathbb{G}, g, \beta, \mathbf{a}, H(\cdot))$ to all clients\;
                    \item  \textbf{Client-side}: Each client $\mathcal{C}_{i}$ performs the key generation function $\mathbf{KeyGen}(p, q, \mathbb{G}, g, \beta, \mathbf{a})\rightarrow (pk_{i}, sk_{i})$\;
                    \item \textbf{Client-side $\rightarrow$ Server-side}: Each client $\mathcal{C}_{i}$ sends $pk_{i}$ to the server\;
                \end{itemize}}   
        \While{The global model $W$ does not converge}{
        \textbf{Server-side:} The server determines the set of online clients $S$ and conducts the aggregated public key generation function $\mathbf{AggPKGen}(pk_{i}, S)\rightarrow pk_{S}$ and the symmetric encryption function $\mathbf{SE}(pk_{i}, \alpha, H(\cdot), W)\rightarrow W_{i}^{*}$\;
        \textbf{Server-side $\rightarrow$ Client-side:} The server sends $(W_{i}^{*}, pk_{S})$ to each client $\mathcal{C}_{i}$ in $S$\;
        \textbf{Client-side:} $\mathcal{C}_{i}$ checks whether $pk_{S}\neq pk_{i}$ holds\;
        \If{$pk_{S}\neq pk_{i}$}{
         \For{$\mathcal{C}_{i} \in S$}{    
                    Perform the symmetric decryption operation $\mathbf{DSE}(\beta, H(\cdot), sk_{i}, W_{i}^{*})\rightarrow W$ \;
                    Train the model and obtain local gradients $\nabla_i$\;
                    Perform the encryption function $\mathbf{Enc}(\nabla_{i}, pk_{S}, \mathrm{SysPara})\rightarrow E(\widehat{\nabla}_{i})$\; 
                    Send the ciphertext $(E_{i1},E_{i2})$ to the server\;
                }          
        \textbf{Server-side:} The server performs the challenge generation function $\mathbf{ChelGen}(E(\widehat{\nabla}_{i}))\rightarrow R$\;
        \textbf{Server-side $\rightarrow$ Client-side:} The server sends $R$ to all clients in $S$\;
            \textbf{Client-side:} $\mathcal{C}_{i}\in S$ performs the response generation function $\mathbf{ResGen}(R, E(\widehat{\nabla}_{i}))\rightarrow T_{i}$\;  
            \textbf{Client-side $\rightarrow$ Server-side:} $\mathcal{C}_{i}$ sends $T_i$ to the server\;
            \textbf{Server-side:} The server performs         
            the aggregation function $\mathbf{Agg}(E(\widehat{\nabla}_{i}), T_{i})\rightarrow (E_{Agg}, d, T)$, and then executes the decryption function $\mathbf{Dec}(E_{Agg}, d, T, \alpha)\rightarrow \widehat{\nabla}$. After that, the server calls Algorithm \ref{alg:alg1} to extract aggregated gradients $\nabla$ from $\widehat{\nabla}$. Finally, the server updates $W$ with Eq. \eqref{eq:aggre}\;
        }
        }
\end{algorithm}

\subsection{System initialization}\label{subsec:sysini}
In our scheme, the server and all clients collaborate to complete the system initialization, which primarily involves the following two steps:

\begin{enumerate}
    \item \emph{System parameter generation}: $\mathbf{SysGen}(\kappa_{1}, \kappa_{2})\rightarrow (p, q, \mathbb{G}, g, \beta, \mathbf{a}, H(\cdot))$. The server is accountable for generating the system parameters, as illustrated below.
    \begin{itemize}
        \item Take a security parameter $\kappa_{1}$ as input, output the parameters $(p, q, \mathbb{G}, g)$, where $p$ is a uniformly chosen large prime number such that $|p|=\kappa_{1}$ and $p-1=\gamma q$ for a specified small integer $\gamma$, 
        $g$ is a uniformly chosen generator of the subgroup $\mathbb{G}$ of prime order $q$ of $\mathbb{Z}^{*}_{p}$. 
    
        \item Randomly select $\alpha \in \mathbb{Z}_{q}$ and set $\beta=g^{\alpha}$. 
        
        \item Choose a  cryptographic hash function $H: \mathbb{G} \rightarrow \{0,1\}^{\kappa_{2}}$, where $\kappa_{2}$ is the bit length of the symmetric key.
        
        \item Choose a super-increasing vector $\textbf{a}=(a_1=1,a_2,...,a_n)$, where $a_2,...,a_n$ are integers satisfying $\sum\nolimits_{j=1}^{i-1} a_jN\nabla_{max}<a_i$, $\sum\nolimits_{i=1}^{n} a_jN\nabla_{max}<p$, where $n$ is the dimension of gradient, $N$ denotes the number of clients and $\nabla_{max}$ is the maximum value of gradient. 
        It's worth noting that this sequence facilitates the compression of multidimensional data into 1-D, which is subsequently encrypted. As a result, the corresponding communication and computation overheads are significantly reduced. For additional information, please refer to \cite{YangLSTY19},\cite{LuLLLS12}.  

        \item Send the system parameter $\mathrm{SysPara}=(p, q, \mathbb{G}, g, \beta, \mathbf{a})$ to clients, and keep the private key $sk=\alpha$ secret.
    \end{itemize}
    \item \emph{Key generation:} $\mathbf{KeyGen}(p, q, \mathbb{G}, g, \beta, \mathbf{a})\rightarrow (pk_{i}, sk_{i})$.
    Given $(p, q, \mathbb{G}, g, \beta, \mathbf{a})$, each client $\mathcal{C}_{i}$ ($i\in \{1,2, \ldots, N\}$) randomly selects the private key $sk_{i} \in \mathbb{Z}_{q}$, and computes the corresponding public key $pk_i=g^{sk_i}$. After that, $\mathcal{C}_{i}$ sends $pk_i$ to the server for storage.
\end{enumerate} 

\subsection{Model training and encryption}\label{subsec:training}
As described in Section \ref{subsec:system_model}, we cannot guarantee that all clients are capable of participating in every iteration. Thus, we should determine which clients can engage in the current iteration before executing model training.  A straightforward approach is for each client to transmit a 1-bit message to the server, signifying its online status. 
To facilitate the subsequent explanation, let $S$ denote the set of clients who are eligible to participate in model training in each iteration.

When determining the set $S$, the server performs the following operations: 
\begin{enumerate}
    \item Compute an aggregated public key $pk_{S}$ using the stored public key $pk_{i}$ of the client $\mathcal{C}_{i}\in S$. The corresponding aggregated public key generation function $\mathbf{AggPKGen}(pk_{i}, S)\rightarrow pk_{S}$ is defined as:
    \begin{equation*}
        pk_{S}=\prod\limits_{\mathcal{C}_i\in S}pk_i=\prod\limits_{\mathcal{C}_i\in S}g^{sk_i} =g^{\sum_{\mathcal{C}_i\in S}sk_i}.
    \end{equation*}
    \item For the client $\mathcal{C}_{i}\in S$, generate the corresponding symmetric key $key_{i}=H(pk_{i}^{\alpha})=H(g^{\alpha sk_{i}})$, where $H(\cdot)$ is the hash function such that $H: \mathbb{G}\rightarrow\{0,1\}^{\kappa_{1}}$ and $\kappa_{1}$ is the bit length of the symmetric key. Then, the server encrypts the current model parameter $W$ by the symmetric encryption algorithm (e.g., AES) as $W_{i}^{*}=SE_{key_{i}}(W)$. The corresponding symmetric encryption function is defined as $\mathbf{SE}(pk_{i}, \alpha, H(\cdot), W)\rightarrow W_{i}^{*}$.
    \item Send $(W_{i}^{*}, pk_{S})$ to each client $\mathcal{C}_{i}$ in the set $S$.
\end{enumerate}


Once receiving $(W_{i}^{*}, pk_{S})$, each client $\mathcal{C}_{i} \in S$ first checks whether $pk_{S}=pk_{i}$ holds. If holds, it indicates that only $\mathcal{C}_{i}$ is online at present. Under these circumstances, it is unnecessary for $\mathcal{C}_{i}$ to persist in uploading any data, as there is no merit for $\mathcal{C}_{i}$.
Otherwise, $\mathcal{C}_{i}$ calculates $key_{i}=H(\beta^{sk_{i}})$ and obtains $W$ by decrypting the symmetric ciphertext $W_{i}^{*}$ with $key_{i}$. The corresponding symmetric decryption operation is defined as $\mathbf{DSE}(\beta, H(\cdot), sk_{i}, W_{i}^{*})\rightarrow W$.

After that, $\mathcal{C}_{i}$ trains the model $W$ in several iterations with local dataset $\mathcal{D}_{i}$ to get the corresponding gradients $\nabla_{i}$. Note that for the deep neural network with $L$ layers, both $W$ and $\nabla_{i}$ actually consist of $L$ matrices, for example $W=\{W^{(l)}\in \mathbb{R}^{n_{l}\times n_{l-1}}\}^{L}_{l=1}$, where $n_{l}$ is the number of neural in the $l$-th layer. As we know, any matrix can be represented with a vector, so we can transform $W$ and $\nabla_{i}$ into the vectors, respectively, and the corresponding dimensions are $n$ (i.e., $W, \nabla_{i} \in \mathbb{R}^{n}$).

In order to protect the privacy of local training data, $\mathcal{C}_{i}$ will encrypt the local gradients $\nabla_{i}=(\nabla_{i1}, \nabla_{i2}, \ldots, \nabla_{in})$ before uploading them to the server. The details of the encryption function $\mathbf{Enc}(\nabla_{i}, pk_{S}, \mathrm{SysPara})\rightarrow E(\widehat{\nabla}_{i})$ are shown below. 
\begin{itemize}
    \item For the local gradient vector $\nabla_{i}$, compress it into a number (denoted as $\widehat{\nabla}_{i}$) via $\mathbf{a}$:    
    \begin{equation}\label{eq:compre}
        \widehat{\nabla}_{i}=a_{1}\nabla_{i1}+a_{2} \nabla_{i2}+\cdots+a_{n}\nabla_{in}<p, 
    \end{equation}
   \item Randomly choose $r_{i1}, r_{i2}\in \mathbb{Z}_{q}$, and compute the ciphertext (denoted as $E(\widehat{\nabla}_{i})=(E_{i1}, E_{i2})$) of $\widehat{\nabla}_{i}$ as: 
    \begin{equation*}
	\left\{\begin{aligned}
			E_{i1}&=(p+1)^{\widehat{\nabla}_{i}}\cdot \beta^{r_{i1}} \mod p^{2}\\
			E_{i2}&=(g^{r_{i2}} \mod{p^2},g^{r_{i1}}\cdot(pk_{S})^{r_{i2}} \mod{p^2})
	\end{aligned}
        \right.
    \end{equation*}
    Note that the random numbers $r_{i1}$ and $r_{i2}$ can differ for different iterations.
\end{itemize}
Finally, $\mathcal{C}_{i}$ sends $E(\widehat{\nabla}_{i})=(E_{i1}, E_{i2})$ to the server. 

\subsection{Model aggregation and decryption}\label{subsec:aggre}
Upon receiving the encrypted gradients from clients, the server initially aggregates them and subsequently decrypts the aggregated result via interaction with clients. Then, the server updates the current model $W$  for the next iteration. The following section expounds on the above procedures in detail.

\begin{enumerate}
    \item \emph{Challenge generation}: $\mathbf{ChelGen}(E(\widehat{\nabla}_{i}))\rightarrow R$.
    With the ciphertext $E_{i2}=(g^{r_{i2}} \mod{p^2},g^{r_{i1}}\cdot(pk_{S})^{r_{i2}} \mod{p^2})$, the server generates a challenge $R$ as:
$$R=\prod\limits_{\mathcal{C}_i\in S}g^{r_{i2}} \mod p^2=g^{\sum\nolimits_{\mathcal{C}_i \in S}r_{i2}} \mod p^2.$$
Then, the server sends $R$ to all clients in the set $S$. 

\item \emph{Response generation}: $\mathbf{ResGen}(R, E(\widehat{\nabla}_{i}))\rightarrow T_{i}$.
After receiving the challenge $R$, the client $\mathcal{C}_{i}\in S$ calculates a response, denoted as $T_{i}$:
$$T_i=R^{sk_i}=(g^{\sum\nolimits_{\mathcal{C}_i \in S}r_{i2}})^{sk_i} \mod{p^2}.$$
Then, $\mathcal{C}_{i}$ sends $T_i$ to the server.

\item After receiving the responses of all clients in $S$, the server first performs the aggregation operation. The corresponding aggregation function $\mathbf{Agg}(E(\widehat{\nabla}_{i}), T_{i})\rightarrow (E_{Agg}, d, T)$ is defined as follows: 
\begin{equation*}
    \left\{\begin{aligned}
        E_{Agg} & =  \prod\limits_{\mathcal{C}_i\in S} (p+1)^{\widehat{\nabla}_{i}} \beta^{r_{i1}} \mod p^{2} \\ 
        & =   (p+1)^{\sum\nolimits_{\mathcal{C}_i\in S}\widehat{\nabla}_{i}}\cdot g^{\alpha{\sum\nolimits_{\mathcal{C}_i\in S}r_{i1}}} \mod{p^{2}}, \\ \\
        d&=\prod\limits_{\mathcal{C}_i\in S}g^{r_{i1}}\cdot(pk_{S})^{r_{i2}} \mod{p^2} \\ 
        &=g^{\sum_{\mathcal{C}_i\in S}r_{i1}}\cdot (pk_{S})^{\sum_{\mathcal{C}_i\in S}r_{i2}} \mod{p^2},\\ \\
        T&=\prod\limits_{\mathcal{C}_i\in S}T_i= \prod\limits_{\mathcal{C}_i\in S}R^{sk_i} \mod{p^2}\\
        &=(g^{\sum\nolimits_{\mathcal{C}_i\in S}r_{i2}})^{\sum\nolimits_{\mathcal{C}_i\in S}sk_i} \mod{p^2}.
    \end{aligned}
    \right.
\end{equation*}
Then, the server performs the decryption operations with its own private key $\alpha$. The corresponding decryption function $\mathbf{Dec}(E_{Agg}, d, T, \alpha)\rightarrow \widehat{\nabla}$ is shown below: 
\begin{equation*}
    \left\{
        \begin{aligned}
        \frac{d}{T}&=g^{\sum\nolimits_{\mathcal{C}_i\in S}r_{i1}} \mod{p^2}, \\
        \widehat{\nabla}&=L\left(E_{Agg}\cdot \left(\frac{d}{T}\right)^{-\alpha} \mod{p^{2}}\right)=\sum\nolimits_{\mathcal{C}_i\in S}\widehat{\nabla}_{i},
        \end{aligned}
    \right.
\end{equation*}
where $L(x)=\frac{x-1}{p}$.
According to Eq. \eqref{eq:compre}, $\widehat{\nabla}=\sum\nolimits_{\mathcal{C}_i\in S}\widehat{\nabla}_{i}$ can be represented as:
\begin{equation*}   \widehat{\nabla}=a_1\sum\limits_{\mathcal{C}_i\in S}\nabla_{i1}+a_2\sum\limits_{\mathcal{C}_i\in S}\nabla_{i2}+\cdots+a_n\sum\limits_{\mathcal{C}_i\in S}\nabla_{in}.
\end{equation*}
By invoking the Algorithm \ref{alg:alg1}, the server can recover the aggregated data $\nabla=(\nabla_{1}, \nabla_{2}, \ldots,\nabla_{n})$, where for $j\in \{1,2,\ldots, n\}$, $\nabla_{j}=\sum\nolimits_{\mathcal{C}_i\in S}\nabla_{ij}$.
\item Finally, according to Eq. \eqref{eq:aggre}, the server performs the update operation with the aggregated data $\nabla=(\nabla_{1}, \nabla_{2}, \ldots,\nabla_{n})$ to obtain the updated global model parameter $W$. 
\end{enumerate}

\begin{algorithm}[htbp]
	\caption{Recover the Aggregated Gradient}\label{alg:alg1}
		\KwData
        {$\nabla=a_{1}\nabla_{1}+a_{2}\nabla_{2}+\cdots+a_{n}\nabla_{n}$ and a super-increasing sequence $\textbf{a}=(a_{1}=1,\cdots ,a_n)$ with $\sum\nolimits_{j=1}^{i-1} a_j\nabla_{j}<a_i$, $\sum\nolimits_{i=1}^{n} a_j\nabla_{j}<p$, for $i=2,\cdots,n$.}
		\KwOut{$(\nabla_{1},\nabla_{2},\cdots,\nabla_{n}).$}
		Set $t_n=\nabla$\;
		\For{$i=n$ to $2$}{
    		$t_{i-1} = t_i \, mod \, a_i$\;
    		$\nabla_{i} = \frac{t_i-t_{i-1}}{a_i}$\;
		}
		$\nabla_1=t_1$\;
		\textbf{return} $(\nabla_1,\nabla_2,\cdots,\nabla_n)$
	\label{alg1}
\end{algorithm}

\begin{remark}[The correctness of recovering $\nabla$]\label{remark}
Because $\nabla_{max}$ is the maximum value of gradient, i.e., $\nabla_{max}\geq\forall_{i,j}\nabla_{ij}$, where $i\in \{1, 2, \ldots, N\}$ and $j\in \{1,2,\ldots, n\}$, we have $\nabla_{j}=\sum_{i\in S}\nabla_{ij}\le S\nabla_{max} \le N\nabla_{max}$, which indicates 
\begin{equation*}  
    \begin{aligned}
    \widehat{\nabla}&=a_1\nabla_{1}+a_2\nabla_{i}+\cdots+a_n\nabla_{n} \\
    &\le a_1N\nabla_{max}+a_2N\nabla_{max}+\cdots+a_nN\nabla_{max} \\
    &=\sum\nolimits_{i=1}^{n} a_jN\nabla_{max}.
    \end{aligned}
\end{equation*}
As defined in Section \ref{subsec:sysini}, $\sum\nolimits_{i=1}^{n} a_j{N}\nabla_{max}<p$, so we have $\widehat{\nabla}<p$. That is, the data $\widehat{\nabla}$ meets the message space of the encryption algorithm and can be correctly decrypted by running $\mathbf{Dec}(E_{Agg}, d, T, \alpha)$. 

Next, we show how to obtain $(\nabla_{1}, \nabla_{2}, \ldots,\nabla_{n})$ from $\widehat{\nabla}$ by running algorithm \ref{alg:alg1}. More specifically, in algorithm \ref{alg:alg1}, given the $\widehat{\nabla}$ as input, we first set $t_n=\widehat{\nabla}$. Since $\sum\nolimits_{j=1}^{i-1} a_j\nabla_{j}\le\sum\nolimits_{j=1}^{i-1} a_jN\nabla_{max}<a_i$, we have 
\begin{equation*}
    \begin{aligned}    a_{1}\nabla_{1}+a_{2}\nabla_{2}+\cdots+a_{n-1}\nabla_{n-1} < a_n,
    \end{aligned}
\end{equation*}
Therefore, $t_{n-1} = t_{n} \mod a_{n} = a_{1}\nabla_{1} +a_{2}\nabla_{2}\cdots + a_{n-1}\nabla_{n-1}$, and we can obtain the gradient
\begin{equation*}
    \begin{aligned}    
    \frac{t_{n}-t_{n-1}}{a_{n}}=\frac{a_{n}\nabla_{n}}{a_{n}}=\nabla_{n}
    \end{aligned}
\end{equation*}
With the similar procedure, we can also prove each $\nabla_{j}=\sum_{i\in S}\nabla_{ij}$ for $j=1,2,\ldots, n-1$. As a result, we prove the server can obtain the correct aggregated local gradient $\nabla=(\nabla_{1}, \nabla_{2}, \ldots,\nabla_{n})$.
\end{remark}

\subsection{Extension}
As we are all aware, the dimension of the deep model parameter is extensive. Thus, it is almost insurmountable to compress such high-dimensional data into one at a time, owing to the message space constraint in the encryption algorithm. Therefore, in this section, we extend our scheme to preclude overflows in ciphertext additions.

More specifically, if the merged data $\widehat{\nabla}_{i} > p$, then we can split $\nabla_{i}=(\nabla_{i1}, \nabla_{i2}, \ldots, \nabla_{in})$ into multiple small segments and compress each segment into the 1-D data by the super-increasing vector $\mathbf{a}$, denoted as $\widehat{\nabla}_{i}=(\widehat{\nabla}_{i1}, \widehat{\nabla}_{i2}, \cdots, \widehat{\nabla}_{iu})$, where $\widehat{\nabla}_{ij}=a_{1}\nabla_{i,jk-k+1}+a_{2} \nabla_{i,jk-k+2}+\cdots+a_{k}\nabla_{i,jk}<p$ for $j\in [1, u]$ and $u=\lceil\frac{n}{k}\rceil$ is the number of split segments. To this end, the difference between the basic version and the extension is that in the system initialization, the server needs to generate $u$ pairs of private and public keys, denoted as $\vec{\alpha} = (\alpha_{1}, \alpha_{2}, \cdots, \alpha_{u})$ and $\vec{\beta} = (\beta_{1}, \beta_{2}, \cdots, \beta_{u})$ where $\beta_{i}=g^{\alpha_{i}}$, and then sends the system parameter $\mathrm{SysPara}=(p, q, \mathbb{G}, g, \vec{\beta}, \mathbf{a})$ to clients. 

After computing the local gradient $\nabla_{i}$, each client $\mathcal{C}_{i}$ compresses $\nabla_{i}$ into $\widehat{\nabla}_{i}$, and then conducts the encryption function $\mathbf{Enc}(\nabla_{i}, pk_{S}, \mathrm{SysPara})$ to obtain $E(\widehat{\nabla}_{i})=(E_{i1}, E_{i2})$, where $E_{i1}=(E(\widehat{\nabla}_{i1}), E(\widehat{\nabla}_{i2}), \ldots, E(\widehat{\nabla}_{iu}))$ and $E_{i2}$ are generated as:  
  \begin{equation*}
	\left\{\begin{aligned}
			E(\widehat{\nabla}_{ij})&=(p+1)^{\widehat{\nabla}_{ij}}\cdot \beta_{j}^{r_{i1}} \mod p^{2}, ~~for~~j\in [1, u]\\
			E_{i2}&=(g^{r_{i2}} \mod{p^2},g^{r_{i1}}\cdot(pk_{S})^{r_{i2}} \mod{p^2})
	\end{aligned}
        \right.
    \end{equation*}
Then, the remaining steps are almost the same as the Section \ref{subsec:aggre} in the basic version. Obviously, in this case, the number of ciphertexts is $\mathbf{u+2}$. It is worth noting that we can directly use the original variant ElGamal to encrypt $\widehat{\nabla}_{i}=(\widehat{\nabla}_{i1}, \widehat{\nabla}_{i2}, \cdots , \widehat{\nabla}_{iu})$ as $(g^{r_{ij}},g^{\widehat{\nabla}_{ij}}\cdot(pk_{S})^{r_{ij}})$ for $j\in [1, u]$, which can still be decrypted correctly by our decryption function. Obviously, the number of ciphertexts generated in this way is $\mathbf{2u}$, which is larger than $\mathbf{u+2}$ when $u>2$. Therefore, our skillful modification offers two distinct advantages: 
\begin{itemize}
    \item Exclusively the server can decrypt the aggregated result, which is more secure than secret sharing-based schemes that overlook safeguarding the aggregated result.
    \item For deep models with large-scale parameters (i.e., $u$ is relatively large), computational costs and communication overhead are significantly diminished, as compared to the naive encryption algorithm.
\end{itemize}

\section{Security and fault tolerance analysis}\label{sec:security}
This section commences by analyzing the security properties of our scheme. Subsequently, we demonstrate the robustness in tolerating dropped clients.
     
\subsection{Security analysis}\label{subsec:security}
In particular, adhering to the design goals elucidated in Section \ref{subsec:design}}, our security analysis encompasses three aspects: 1) The privacy-preservation of individual local gradient, 2) The privacy-preservation of the aggregated result, and 3) The robustness against collusion.

\subsubsection{Privacy-preservation of individual local gradient}
This segment concentrates on how our proposed scheme can thwart both the internal adversary (the server and curious clients) and the external eavesdropper from acquiring the local gradient of a specific client. 
Before delving into the specifics, we would like to provide some relevant complexity assumptions \cite{YangLSTG21}, which substantiate the security proof of our scheme.

\begin{definition}[Discrete Logarithm (DL) Problem]\label{def:DL}
The DL problem in $\mathbb{G}$ is stated as follows: given $\beta \in \mathbb{G}$, compute $\alpha \in \mathbb{Z}_{q}$ such that $\beta=g^{\alpha}$.
\end{definition}

\begin{definition}[Computational Diffie-Hellman (CDH) Problem]\label{def:CDH}
The CDH problem in $\mathbb{G}$ is stated as follows: Given $g, g^{a}, g^{b} \in  \mathbb{G}$ for unknown $a, b \in \mathbb{Z}_{q}$, compute $g^{ab}$. 
\end{definition}

\begin{definition}[Decisional Diffie–Hellman (DDH) assumption]\label{def:ddh}
 The DDH assumption in $\mathbb{G}$ is stated as: given $(g, g^{a}, g^{b}, \omega)$ for $g\in \mathbb{G}$, $\omega \in \mathbb{G}$ and unknown $a, b \in \mathbb{Z}_{q}$, no probabilistic, polynomial-time algorithm $\mathcal{B}$ can determine whether $\omega=g^{ab}$ or a random element from $\mathbb{G}$ with more than a negligible function $\mathrm{negl(\kappa)}$, i.e., 
\begin{align*}
    \mathrm{DDH-Adv}_{\mathcal{B}}=&|\Pr[\mathcal{B}(g, g^{a}, g^{b}, g^{ab} )=1] \\
    &- \Pr[\mathcal{B}(g, g^{a}, g^{b}, \omega)=1]| \leq \mathrm{negl(\kappa)}.
\end{align*}
\end{definition}

In what follows, we show the details of the semantic security of our proposed scheme. Without loss of generality, we suppose the adversary $\mathcal{A}$ (maybe an external eavesdropper, the server, or the client $\mathcal{C}_{j}$) tries to obtain the client $\mathcal{C}_{i}$'s local gradient $\nabla_{i}$. As described in Section \ref{sec:scheme}, the client $\mathcal{C}_{i}$ needs to uploads the public key $Pk_{i}=g^{sk_{i}}$, the ciphertext $E(\widehat{\nabla}_{i})$ and the response $T_{i}=R^{sk_{i}}$. Obviously, given $g$, $R$, $Pk_{i}=g^{sk_{i}}$ and $T_{i}=R^{sk_{i}}$, $\mathcal{A}$ cannot obtain the private key $sk_{i}$ from $Pk_{i}$ and $T_{i}$ due to the difficulty of solving the DL problem (see Definition \ref{def:DL}). Therefore, we focus on analyzing the security of the ciphertext $E(\widehat{\nabla}_{i})$, which is encrypted as:
\begin{equation*}
	\left\{\begin{aligned}
			E_{i1}&=(p+1)^{\widehat{\nabla}_{i}}\cdot \beta^{r_{i1}} \mod p^{2},\\
			E_{i2}&=(g^{r_{i2}} \mod{p^2},g^{r_{i1}}\cdot(pk_{S})^{r_{i2}} \mod{p^2}),
	\end{aligned}
        \right.
    \end{equation*}
    where $r_{i1}$ and $r_{i2}$ are randomly chosen from $\mathbb{Z}_{q}$.

If an adversary $\mathcal{A}$ wants to obtain $(p+1)^{\widehat{\nabla}_{i}}$, then $\mathcal{A}$ needs to know $\beta^{r_{i1}}$. Note that $\beta=g^{\alpha}$ and $\alpha$ is selected by the server, so if $\mathcal{A}$ (corrupting the server) obtains $g^{r_{i1}}$, then $\beta^{r_{i1}}$ can be calculated as $\beta^{r_{i1}}=(g^{r_{i1}})^{\alpha}=(g^{\alpha})^{r_{i1}}$. As a result, the core of the security in our scheme comes down to whether $\mathcal{A}$ can obtain $g^{r_{i1}}$. Obviously, $g^{r_{i1}}$ can be regarded as the plaintext $m$, which is encrypted in the ciphertext $E_{i2}$. Therefore, in the following parts, we focus on the security analysis of the ciphertext $E_{i2}$. Since $pk_{S}=g^{\sum_{\mathcal{C}_i\in S}sk_i}$, we directly let $sk_{S}=\sum_{\mathcal{C}_i\in S}sk_i$, then $E_{i2}$ can be simplified as $E_{i2}=(g^{r_{i2}},m\cdot g^{sk_{S}\cdot r_{i2}})$, where $m=g^{r_{i1}}$. The corresponding security  is given in Theorem \ref{the:security}.

\begin{theorem}\label{the:security}
    Our proposed scheme is semantically secure against the chosen-plaintext attack under the DDH assumption.
\end{theorem}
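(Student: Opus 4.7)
The plan is to prove IND-CPA security by a two-step game-hopping argument, with each hop reduced to the DDH assumption, exploiting the clean separation of the ciphertext into a Paillier-style message-carrying component $E_{i1}$ masked by $\beta^{r_{i1}}$ and an ElGamal-style encapsulation $E_{i2}$ that conceals the mask randomizer $g^{r_{i1}}$ under the aggregated key $pk_S$. Following the simplification the authors introduce just before the theorem, I would frame the IND-CPA experiment as: $\mathcal{A}$ receives the public parameters together with $\beta$ and $pk_S$, submits two plaintexts $\widehat{\nabla}^{(0)}, \widehat{\nabla}^{(1)}$, and must distinguish encryptions of $\widehat{\nabla}^{(b)}$ for a uniformly random bit $b$; denote its distinguishing advantage by $\mathrm{Adv}_{\mathcal{A}}$.

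First, I would move from the real game (Game~0) to a Game~1 in which the second coordinate of $E_{i2}$, originally $g^{r_{i1}} \cdot pk_S^{r_{i2}}$, is replaced by a uniformly random element of $\mathbb{G}$. The transition is bounded by one DDH instance: given $(g, g^x, g^y, \omega)$, set $pk_S \gets g^x$, use $g^y$ as the first coordinate of $E_{i2}$ (implicitly $r_{i2} = y$), and output $g^{r_{i1}} \cdot \omega$ as the second coordinate; when $\omega = g^{xy}$ the simulation matches Game~0, and when $\omega$ is uniform it matches Game~1. During this hop, $E_{i1}$ is generated honestly since the reduction picks both $r_{i1}$ and $\alpha$ itself.

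Second, in Game~2 I would replace $\beta^{r_{i1}}$ appearing inside $E_{i1}$ by a uniformly random element of the order-$q$ subgroup $\mathbb{G}$. Because $g^{r_{i1}}$ has already been absorbed into a uniform mask in Game~1 and no longer appears anywhere in $\mathcal{A}$'s view, a second DDH instance $(g, g^{x'}, g^{y'}, \omega')$ cleanly embeds $\beta \gets g^{x'}$, implicitly sets $r_{i1} = y'$, and uses $\omega'$ in place of $\beta^{r_{i1}}$ in $E_{i1}$. Since $(p+1)$ generates the order-$p$ subgroup of $\mathbb{Z}^*_{p^2}$ while $\beta \in \mathbb{G}$ has order $q$ with $\gcd(p,q)=1$, the two factors of $E_{i1}$ live in complementary subgroups, so multiplication by a uniform $\mathbb{G}$-element fully randomizes the $\beta^{r_{i1}}$-coset and information-theoretically hides $\widehat{\nabla}^{(b)}$ in Game~2. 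Collecting the two hops yields $\mathrm{Adv}_{\mathcal{A}} \le 2\cdot \mathrm{DDH\text{-}Adv}_{\mathcal{B}}$, which is negligible under the DDH assumption.

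The main obstacle is the second hop: in an honestly generated ciphertext the exponent $r_{i1}$ is shared between $E_{i1}$ and $E_{i2}$, so a direct one-shot reduction cannot both reuse the same $r_{i1}$ across the two components and simultaneously embed a DDH challenge exponent in its place. This is precisely why the ordering of the hops matters, Game~1 first severs the dependence of $E_{i2}$ on $r_{i1}$, after which Game~2 can treat $r_{i1}$ as an unknown DDH exponent without consistency constraints. Writing this decoupling carefully, and verifying the Paillier-style subgroup-structure claim that justifies the information-theoretic step in Game~2, is the delicate part; the remainder is a standard ElGamal-style reduction, and the hardness of extracting $sk_i$ from $(pk_i, T_i)$ via the DL problem, as the authors already note, takes care of the auxiliary values that $\mathcal{A}$ observes outside the challenge ciphertext.
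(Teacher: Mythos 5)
Your overall architecture (first decouple $E_{i2}$ from $r_{i1}$ via one DDH hop, then randomize $\beta^{r_{i1}}$ via a second) is more ambitious than the paper's own proof, which never runs a hybrid over the full ciphertext: the paper recasts $E_{i2}$ as a plain ElGamal encryption of the mask $m=g^{r_{i1}}$ under $pk_S$, gives a single standard DDH reduction for that component alone, and argues only informally, outside the reduction, that recovering $\widehat{\nabla}_{i}$ from $E_{i1}$ would require $\beta^{r_{i1}}$ and hence $g^{r_{i1}}$. Your first hop is essentially that same reduction, and your remark about why the two hops must be ordered as they are (to sever the shared randomness $r_{i1}$ before embedding it as a DDH exponent) is correct and is a genuine improvement in rigor over the paper's presentation.

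However, the concluding information-theoretic step of your Game~2 is wrong, and in a way that matters. If the mask multiplying $(p+1)^{\widehat{\nabla}^{(b)}}$ is a uniform element of an order-$q$ subgroup of $\mathbb{Z}^{*}_{p^{2}}$ with $\gcd(p,q)=1$, then the two factors living in ``complementary subgroups'' is precisely what destroys hiding rather than providing it: the decomposition of $E_{i1}$ into its order-$p$ and order-$q$ components is unique, and the order-$p$ component is efficiently extractable --- compute $E_{i1}^{q} \bmod p^{2} = (p+1)^{q\widehat{\nabla}^{(b)}}$, apply $L(\cdot)$, and multiply by $q^{-1} \bmod p$ to recover $\widehat{\nabla}^{(b)}$ exactly. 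So in your Game~2 the adversary determines $b$ with probability $1$, and the bound $\mathrm{Adv}_{\mathcal{A}} \le 2\cdot \mathrm{DDH\text{-}Adv}_{\mathcal{B}}$ collapses. Any hiding of the Paillier-style factor must come from a nontrivial order-$p$ component of $\beta^{r_{i1}} \bmod p^{2}$, which requires $g$ to have order $pq$ modulo $p^{2}$; but then the indistinguishability statement needed for your second hop is no longer plain DDH in a prime-order group (indeed, in the order-$pq$ group the order-$p$ projection makes naive DDH false, since discrete logarithms there are easy via $L$). The paper glosses over this subgroup structure as well, but its narrower argument never commits to a claim about it, whereas your proof rests its final, supposedly information-theoretic step on a claim that is the exact opposite of the truth. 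To repair the argument you would need to isolate the correct hardness assumption for the $E_{i1}$ component (something in the spirit of the Bresson--Catalano--Pointcheval analysis over $\mathbb{Z}^{*}_{p^{2}}$) rather than reusing DDH in $\mathbb{G}$ for the second hop.
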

\begin{proof}\label{proof:theo1}
Suppose a polynomial-time adversary $\mathcal{A}$ (maybe the server, a client $\mathcal{C}_{j}$ or an external eavesdropper)  can attack our scheme with advantage $\epsilon(\kappa)$, then we can build an algorithm $\mathcal{B}$ that can break the DDH problem with the same advantage as follows.
\begin{itemize}
     \item \emph{Init:} Given the parameters $(p,q, \mathbb{G}, g, g^{r}, g^{sk_{S}}, \omega)$\footnote{Note that the super-increasing vector $\mathbf{a}$ is used for compressing a vector into the data to reduce the computational and communication overheads. The corresponding operation is performed in the plaintext domain, so it has nothing to do with security, and can be ignored here. In addition, the parameter $\beta$ is not included in the ciphertext $E_{i2}$, and thus we ignore it here.}, $\mathcal{B}$ sets $pk_{S}= g^{sk_{S}}$, and gives the public parameters $(p,q, \mathbb{G}, g, pk_{S})$ to $\mathcal{A}$. 
         
     \item \emph{Challenge:} $\mathcal{A}$ selects two messages $m_{0}, m_{1}\in \mathbb{G}$, and then submits them to $\mathcal{B}$. $\mathcal{B}$ flips a fair binary coin $b$, and returns an encryption of $m_{b}\in \{m_{0}, m_{1}\}$. The ciphertext is output as 
      $$E_{b}=(g^{r}, m_{b}\cdot\omega).$$
      
     \item \emph{Guess:} $\mathcal{A}$ outputs a guess $b^{*}$ of $b$. If $b^{*}=b$, $\mathcal{B}$ outputs $1$ to indicate that $\omega=g^{r\cdot sk_{S}}$; Otherwise, $\mathcal{B}$ outputs $0$ to indicate that $\omega$ is a random element from $\mathbb{G}$ (i.e., $w=g^{x}$ for random $x\in \mathbb{Z}_{q}$).
 \end{itemize}

When $\omega=g^{r\cdot sk_{S}}$, then $\mathcal{A}$ sees a proper encryption of $m_{b}$, i.e., $E_{b}=(g^{r}, m_{b}g^{r\cdot sk_{S}})=(g^{r}, m_{b}(pk_{S})^{r})$. The advantage of $\mathcal{A}$ is $\epsilon(\kappa)$ by definition, i.e., $\mathcal{A}$ can obtain $m_{b}$ with advantage $\epsilon(\kappa)$ from the ciphertext $E_{b}$, so we have $\Pr[\mathcal{A}(b^{*}=b)]=\frac{1}{2}+\epsilon(\kappa)$. Since $\mathcal{B}$ outputs $1$ exactly when the output $b^{*}$ of $\mathcal{A}$ is equal to $b$, we have
\begin{equation*}
    \Pr\left[\mathcal{B}\left(g, g^{r}, g^{sk_{S}}, g^{r\cdot sk_{S}} \right)=1\right]=\Pr[\mathcal{A}(b^{*}=b)]=\frac{1}{2}+\epsilon(\kappa).
\end{equation*}
When $w=g^{x}$ is a random element from $\mathbb{G}$, then $E_{b}=(g^{r}, m_{b}g^{x})$ is not an actual encryption scheme (i.e., $m_{b}g^{x}$ is a random element of $\mathbb{G}$ from $\mathcal{A}$'s view), which means that $\mathcal{A}$ gains no information about $b$ except blinding guess. Hence, $\Pr[\mathcal{A}(b^{*}=b)]=\frac{1}{2}$, which implies that 
\begin{equation*}
     \Pr\left[\mathcal{B}\left(g, g^{r}, g^{sk_{S}}, g^{x} \right)=1\right]=\Pr[\mathcal{A}(b^{*}=b)]=\frac{1}{2}.
\end{equation*}
Therefore, we can obtain that 
\begin{align*}
    \mathbf{DDH-Adv}_{\mathcal{B}}=&|\Pr[\mathcal{B}(g, g^{r}, g^{sk_{S}}, g^{r\cdot sk_{S}} )=1] \\
    &- \Pr[\mathcal{B}(g, g^{r}, g^{sk_{S}}, g^{x} )=1]| \\
    =& \left|\frac{1}{2}+\epsilon(\kappa)-\frac{1}{2}\right|=\epsilon(\kappa),
\end{align*}
which implies that $\epsilon(\kappa) \leq \mathrm{negl(\kappa)}$ with Definition \ref{def:ddh}.
\end{proof}

From the Theorem \ref{the:security}, we can know that $\mathcal{A}$ cannot obtain $g^{r_{i1}}$ from the ciphertext $E_{i2}$, which implies that $\mathcal{A}$ cannot compute $\beta^{r_{i1}}$ even if the server knows $\alpha$. It is worth noting that neither external eavesdroppers nor clients know about $\alpha$, so it is even less likely for them to get $\beta^{r_{i1}}$ compared with the server. Therefore, our scheme can ensure any adversary including the internal participants (i.e., the server or the client $\mathcal{C}_{j}$ for $j\neq i$ ) and the external eavesdropper cannot get $\widehat{\nabla}_{i}$ of a particular client $\mathcal{C}_{i}$.

\subsubsection{Privacy-preservation of aggregated result}
To our knowledge, many state-of-the-art secure aggregated works  \cite{BonawitzIKMMPRS17},\cite{SoNYL0AGA22} do not take into account the privacy-preservation of the aggregated result. In other words, any adversary other than the server, such as eavesdroppers or curious clients, can obtain the aggregated results, as long as they have access to the uploaded data. Contrarily, our designed scheme in this paper circumvents this situation by ensuring that only the server can decrypt the aggregated encrypted gradient.

\begin{theorem}\label{the:aggregation}
   The aggregated gradient is semantically secure against the chosen-plaintext attack launched by curious clients or eavesdroppers under the DDH assumption.
\end{theorem}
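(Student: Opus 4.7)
The plan is to reduce the claim to the DDH assumption by mirroring the proof of Theorem \ref{the:security}. The structural observation is that the aggregated ciphertext has the form $E_{Agg}=(p+1)^{\widehat{\nabla}}\cdot \beta^{R_1}$ with $R_1=\sum_{\mathcal{C}_i\in S} r_{i1}$, and from the published decryption aids one can only recover $g^{R_1}=d/T$; the actual blinding term $\beta^{R_1}=g^{\alpha R_1}$ remains a CDH/DDH object for anyone lacking the server's secret $\alpha$. Since the adversary here is a curious client or an eavesdropper, never the server, $\alpha$ is hidden from their view, so distinguishing $E_{Agg}$ from a uniform element of the appropriate subgroup reduces to a DDH decision on $(g,\beta,g^{R_1},\beta^{R_1})$.

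First I would set up a distinguisher $\mathcal{B}$ that, on input a DDH instance $(g,g^a,g^b,\omega)$, embeds $\beta=g^a$ (implicitly $\alpha=a$) into $\mathrm{SysPara}$ and implicitly sets $R_1=b$. The remaining system parameters and the key pairs of non-adversarial clients are generated honestly; a corrupted $\mathcal{C}_j$ generates its own keys as in the real protocol. When $\mathcal{A}$ submits two gradient vectors $\nabla^{(0)},\nabla^{(1)}$, $\mathcal{B}$ flips a bit $b'$, designates a distinguished honest client $i^\ast\neq j$, samples $r_{i1}\leftarrow \mathbb{Z}_q$ for each $i\neq i^\ast$, and builds $E_{i1}$ honestly for those clients; the remaining ciphertext is defined as
\[
E_{i^\ast 1}=(p+1)^{\widehat{\nabla}^{(b')}_{i^\ast}}\cdot \omega\cdot \beta^{-\sum_{i\neq i^\ast}r_{i1}},
\]
so that $\prod_{i\in S}E_{i1}=(p+1)^{\widehat{\nabla}^{(b')}}\cdot \omega$. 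The element $g^{r_{i^\ast 1}}$ needed inside $E_{i^\ast 2}$ is computed as $g^{b}\cdot g^{-\sum_{i\neq i^\ast}r_{i1}}$, available from the DDH instance. Finally, $\mathcal{B}$ samples $R_2\leftarrow \mathbb{Z}_q$ and sets $R=g^{R_2}$, $T=(pk_S)^{R_2}$, $d=g^{b}\cdot T$, together with $T_i=R^{sk_i}$ for every honest $\mathcal{C}_i$ (the adversary-controlled $\mathcal{C}_j$ produces $T_j$ itself).

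When $\omega=g^{ab}$, the simulated transcript is distributed identically to an honest execution whose aggregate plaintext is $\widehat{\nabla}^{(b')}$, so $\mathcal{A}$'s advantage in guessing $b'$ is the assumed $\epsilon(\kappa)$; when $\omega$ is uniform in $\mathbb{G}$, the factor $\omega$ inside $E_{i^\ast 1}$ makes $E_{Agg}$ uniform and independent of $b'$, so the advantage is exactly $0$. Outputting $1$ iff $\mathcal{A}$'s guess is correct therefore yields a DDH advantage of $\epsilon(\kappa)$, which by Definition \ref{def:ddh} must satisfy $\epsilon(\kappa)\le \mathrm{negl}(\kappa)$. The main obstacle is consistency of the per-client view when $\mathcal{A}$ corrupts some $\mathcal{C}_j$: choosing $i^\ast\neq j$ leaves $\mathcal{A}$'s own $(sk_j,r_{j1},r_{j2})$ untouched by the embedding, while the implicit unknown $r_{i^\ast 1}=b-\sum_{i\neq i^\ast}r_{i1}$ is uniform over $\mathbb{Z}_q$ and never needs to be evaluated in the clear, which is precisely what makes the DDH reduction go through.
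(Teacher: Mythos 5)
Your proposal takes the same route as the paper: the paper reduces the secrecy of $E_{Agg}=(p+1)^{\sum\widehat{\nabla}_{i}}\cdot g^{\alpha\sum r_{i1}}$, given $\beta=g^{\alpha}$ and $d/T=g^{\sum r_{i1}}$, to the DDH instance $(g,\beta,g^{\sum r_{i1}},\beta^{\sum r_{i1}})$ and then omits the actual game ``due to the page limitation,'' which is exactly the reduction you carry out. Your write-up in fact supplies the details the paper skips---embedding the challenge element $\omega$ into a designated honest client's $E_{i^{\ast}1}$ so that the entire transcript (individual ciphertexts, $R$, $d$, the $T_{i}$) remains consistent---so it is, if anything, more complete than the published argument.
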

\begin{proof}
    Suppose an adversary $\mathcal{A}$ (e.g., eavesdroppers or curious clients) can obtain the data communicated between each client and the server. In this case, $\mathcal{A}$ can obtain $pk_{S}$, $R$ and $\{E(\widehat{\nabla}_{i}), T_{i}\}_{\mathcal{C}_{i}\in S}$. Obviously, $\mathcal{A}$ can perform the aggregation function $\mathbf{Agg}(E(\widehat{\nabla}_{i}), T_{i})\rightarrow (E_{Agg}, d, T)$, and compute $d/T$ to get $g^{\sum_{\mathcal{C}_{i}\in S}r_{i1}}$. Thus, the crux is that given  $(p+1)^{\sum\nolimits_{\mathcal{C}_i\in S}\widehat{\nabla}_{i}}\cdot g^{\alpha{\sum\nolimits_{\mathcal{C}_i\in S}r_{i1}}}$ and $g^{\sum_{\mathcal{C}_{i}\in S}r_{i1}}$, whether $\mathcal{A}$ can recover $(p+1)^{\sum\nolimits_{\mathcal{C}_i\in S}\widehat{\nabla}_{i}}$. Note that $\beta=g^{\alpha}$ is the public key, which can be obtained by $\mathcal{A}$. Hence, the problem comes down to given $g^{\alpha}$ and $g^{\sum_{\mathcal{C}_{i}\in S}r_{i1}}$, whether $\mathcal{A}$ can decrypt $(p+1)^{\sum\nolimits_{\mathcal{C}_i\in S}\widehat{\nabla}_{i}}\cdot g^{\alpha{\sum\nolimits_{\mathcal{C}_i\in S}r_{i1}}}$ or not. Obviously, this problem can be attributed to the DDH problem. Therefore, similar to the proof of Theorem \ref{the:security}, the value $(p+1)^{\sum\nolimits_{\mathcal{C}_i\in S}\widehat{\nabla}_{i}}\cdot g^{\alpha{\sum\nolimits_{\mathcal{C}_i\in S}r_{i1}}}$ meets the semantically secure under the DDH assumption. The details of the proof are omitted due to the page limitation. 
\end{proof}

It is noteworthy that our scheme offers the advantage of ensuring that when the model is well-trained, only the server can access the well-trained model and provide the prediction service \footnote{During the model training, the server sends the encrypted model $W^{*}_{i}$ to each client $\mathcal{C}_{i}\in S$ and only the client in $S$ can decrypt the symmetric ciphertext to obtain $W$ (i.e., other offline clients and eavesdroppers cannot get $W$). Although these clients can obtain the global model $W$, this model is the result of an intermediate process rather than the final well-trained model. Thus, its leakage will not affect the final result too much. }.  This scenario is quite common in real-world applications.

 For example, when the server is a service provider and clients are data providers, the server wants to offer prediction services based on a well-trained deep model. Evidently, this well-trained deep model is a valuable asset, which must be safeguarded from disclosure. 
 However, the prerequisite is that the server must compensate the data provider to generate this model. It is apparent that data providers will not allow their training data to be leaked, as training data is their property and involves their privacy.
 For example, the training architecture of the privileged party is introduced in \cite{SongWWTLRWH22},  where the privileged party dominates the model training, and only the privileged party can recover the final model. Clearly, in our scheme, the server is akin to the privileged party.

\subsubsection{Robustness against collusion}
As stipulated in the Threat Model, the server may collude with some clients to jointly obtain the local gradients of other clients, thereby enhancing the attack capability. Thus, in this section, we demonstrate how our scheme can thwart collusion between the server and curious clients.

\begin{theorem}\label{the:collusion}
    Our scheme guarantees the privacy-preservation of the individual local gradient as long as any two online clients do not collude, i.e., our scheme can withstand collusion between the server and up to any $N-2$ clients.  
\end{theorem}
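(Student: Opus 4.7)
The plan is to reduce this theorem to Theorem \ref{the:security} (and hence to the DDH assumption), by showing that even after the server and any $N-2$ colluding clients pool all their knowledge, the residual view is structurally identical to a DDH instance with respect to the two honest clients' combined private key. First I would fix notation: let $S^{*}\subseteq S$ denote the set of $|S^{*}|=N-2$ corrupted clients, and let $\mathcal{C}_a,\mathcal{C}_b$ denote the two remaining honest online clients. The adversary $\mathcal{A}$ is assumed to know $\alpha$, all $\{sk_i,r_{i1},r_{i2},\nabla_i\}_{i\in S^{*}}$, together with every transcript: the public keys $\{pk_i\}$, the aggregated key $pk_{S}$, the ciphertexts $\{E(\widehat{\nabla}_{i})\}_{i\in S}$, the challenge $R$, and the responses $\{T_i\}_{i\in S}$.

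Next I would isolate what is legitimately leaked. Running the honest aggregation/decryption procedure, $\mathcal{A}$ recovers $\widehat{\nabla}=\sum_{i\in S}\widehat{\nabla}_{i}$, from which (since it knows its own contributions) it can derive $\widehat{\nabla}_{a}+\widehat{\nabla}_{b}$; the whole point of the proof is to show that this is exactly all that leaks about the honest inputs. I would cancel every known factor from the transcript to expose the residual hard core. Because $\mathcal{A}$ knows $\sum_{i\in S^{*}}sk_i$, from the second component of $E_{a2}=(g^{r_{a2}},\,g^{r_{a1}}\cdot pk_{S}^{\,r_{a2}})$ it can divide out $(g^{r_{a2}})^{\sum_{i\in S^{*}}sk_i}$ to obtain the equivalent ciphertext $(g^{r_{a2}},\,g^{r_{a1}}\cdot (pk_a pk_b)^{r_{a2}})$; the same holds for $E_{b2}$. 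Thus, after this change of variables, the two honest ciphertexts become ElGamal-type ciphertexts under the single unknown key $pk_a pk_b=g^{sk_a+sk_b}$. Likewise, the responses $T_a,T_b$ can only be combined (via $\mathcal{A}$'s known $r_{i2}$ values and public keys $pk_a,pk_b$) to yield $g^{(sk_a+sk_b)(r_{a2}+r_{b2})}$, i.e., the single joint quantity needed to decrypt the product $E_{a2}\cdot E_{b2}$ but never the individual factors.

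I would then formalize this as a reduction: given a DDH instance $(g,g^{x},g^{y},\omega)$ with $x=sk_a+sk_b$ and $y=r_{a2}$, a simulator can embed $g^{x}$ as $pk_a\cdot pk_b$ (choosing, say, $sk_b$ at random and setting $pk_a=g^{x}/pk_b$), embed $g^{y}$ into $E_{a2}$, and honestly generate everything else including $\mathcal{C}_b$'s transcript using the known $sk_b$. With $\omega=g^{xy}$ this perfectly simulates a real execution, while with $\omega$ uniform, $E_{a1}$ becomes information-theoretically independent of $\widehat{\nabla}_{a}$ given the fixed value of $\widehat{\nabla}_{a}+\widehat{\nabla}_{b}$. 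A standard hybrid then concludes that any adversary distinguishing two choices of $(\widehat{\nabla}_{a},\widehat{\nabla}_{b})$ with the same sum must break DDH, so the individual honest gradients leak nothing beyond their pairwise sum.

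The main obstacle I anticipate is the careful bookkeeping in Step 2 and the reduction: I must verify that $T_a,T_b$ together with $E_{a2},E_{b2}$ really do not give $\mathcal{A}$ any handle on $g^{(sk_a+sk_b)r_{a2}}$ on its own, only on its product with $g^{(sk_a+sk_b)r_{b2}}$. This needs an explicit algebraic check that every quantity the colluders can form from $\{T_i,E_{i2}\}_{i\in S}$ using their known private keys and random coins lies in the subgroup generated by $g$ and the single hard element $g^{(sk_a+sk_b)(r_{a2}+r_{b2})}$. Once that structural lemma is in place, the DDH reduction above proceeds exactly as in the proof of Theorem \ref{the:security}, and the boundary case $|S^{*}|=N-1$ (only one honest client) is naturally excluded because then the ``unknown aggregate'' degenerates to a single $sk_a$ but the legitimately revealed $\widehat{\nabla}_{a}$ already equals the honest contribution.
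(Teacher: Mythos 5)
Your overall strategy differs from the paper's. The paper does not give a simulation-based indistinguishability proof for the collusion case at all: it argues \emph{constructively}, by exhibiting the strongest computation the coalition can perform. Concretely, it lets the $|S|-2$ colluders pool $\sum_{k}sk_k$ and $\sum_k r_{k2}$, hands these to the server, and then walks through the algebra showing that stripping the colluders' contributions from $d_{i\& j}$ and $T_{i\& j}$ yields exactly $g^{r_{i1}+r_{j1}}$ and hence only $(p+1)^{\widehat{\nabla}_i+\widehat{\nabla}_j}$; the claim that nothing finer is obtainable is delegated back to Theorem \ref{the:security} and to a CDH footnote. Your plan --- reduce the whole corrupted view to a DDH instance keyed to $g^{sk_a+sk_b}$ and prove indistinguishability of any two honest-input pairs with the same sum --- is a genuinely different and, in ambition, stronger route: it would establish that the pairwise sum is \emph{all} that leaks, rather than merely that one natural attack yields only the pairwise sum. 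Your identification of the legitimate leakage ($\widehat{\nabla}_a+\widehat{\nabla}_b$) and your cancellation of the colluders' known exponents from $E_{a2},E_{b2},T_a,T_b$ match the paper's algebra step for step.

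The gap is the one you flag yourself, and it is not minor bookkeeping: it is the crux. Your single-instance embedding with $x=sk_a+sk_b$ and $y=r_{a2}$ forces the simulator to plant $\omega$ in \emph{two} places --- in the second component of $E_{a2}$ (via $pk_S^{\,r_{a2}}=g^{xy}\cdot(g^{y})^{\sum_{k\in S^{*}}sk_k}$) and in $T_a$ (via $g^{sk_a r_{a2}}=g^{xy}(g^{y})^{-sk_b}$). The adversary can form the ratio of these two transmitted values and cancel $\omega$ entirely; what then protects $g^{r_{a1}}$ is the residual factor $pk_a^{\,r_{b2}}$ coming from the \emph{other} honest client's randomness, which is itself only computationally hidden. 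So the reduction cannot close with one DDH instance as sketched: you need either a hybrid over (at least) two correlated DDH instances covering $r_{a2}$ and $r_{b2}$, or the ``structural lemma'' you defer, stating that every monomial the coalition can assemble from $\{E_{i2},T_i\}$ and its known exponents lies in the group generated by $g$ and $g^{(sk_a+sk_b)(r_{a2}+r_{b2})}$. Until one of those is proved, the argument establishes no more than the paper's constructive version does. Also note a small misstatement: with $\omega$ uniform, $E_{a1}$ does not become \emph{information-theoretically} independent of $\widehat{\nabla}_a$; $r_{a1}$ still appears in $E_{a1}$ through $\beta^{r_{a1}}$, and what you actually get is computational hiding once $g^{r_{a1}}$ is no longer determined by $E_{a2}$.
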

\begin{proof} 
    Without loss of generality, let's assume that two online clients $\mathcal{C}_{i}$ and $\mathcal{C}_{j}$ do not collude, whereas the remaining $N-2$ clients collude with the server to try to obtain the local gradients of $\mathcal{C}_{i}$ and $\mathcal{C}_{j}$. As proved in Theorem \ref{the:security}, neither the server nor the client can obtain $\widehat{\nabla}_{i}$ or $\widehat{\nabla}_{j}$ from $E(\widehat{\nabla}_{i})$ or $E(\widehat{\nabla}_{j})$, respectively. Thus, they have to address it from other transmitted data. Note that since $N-|S|$ dropped clients would not join the model training, they are clearly unlikely to breach the privacy guarantee, regardless of whether they collude or not. Therefore, we only consider the case where online $|S|-2$ curious clients (i.e., $S/\{\mathcal{C}_{i}, \mathcal{C}_{j}\}$) collude with the server for simplicity. In this scenario, these clients can collaborate to compute $\sum_{\mathcal{C}_k\in S/\{\mathcal{C}_{i}, \mathcal{C}_{j}\}}sk_{k}$ and $\sum_{\mathcal{C}_k\in S/\{\mathcal{C}_{i}, \mathcal{C}_{j}\}}r_{k2}$\footnote{The simple way is that all colluded clients send their parameters to one client (such as $\mathcal{C}_{k}$) for computing. It is worth noting that even though $\mathcal{C}_{k}$ knows other client's secret parameters, for example, $\mathcal{C}_{l}$'s $sk_{l}$ and $r_{l2}$, it still cannot obtain $(p+1)^{\widehat{\nabla}_{l}}$. Specifically, given $E(\widehat{\nabla}_{l})$, $r_{l2}$ and $sk_{l2}$, $\mathcal{C}_{k}$ can obtain $g^{r_{l1}}$ by computing $g^{r_{l1}}(pk_{S}^{r_{l2}})^{-r_{l2}}$. After that, given $\beta=g^{\alpha}$ and $g^{r_{l1}}$, it cannot obtain $\beta^{r_{l1}}$ due to the difficulty of solving the CDH problem (see Definition \ref{def:CDH}). Thereby, $\mathcal{C}_{k}$ cannot get $(p+1)^{\widehat{\nabla}_{l}}$ from $E_{l1}$.}, which are then sent to the server. However, although the colluded client $\mathcal{C}_k$ wants to obtain $\nabla_{i}$ of a particular client $\mathcal{C}_{i}$, its own local gradient $\nabla_{k}$ should remain protected from others. Hence, $\mathcal{C}_k$ does not directly sends $sk_{k}$ and $r_{k2}$ to the server \footnote{Obviously, if the server knows $sk_{k}$ and $r_{k2}$, then it can compute $pk_{S}^{r_{k2}}$. Thereby it can directly obtain $(p+1)^{\widehat{\nabla}_{i}}$ by computing $g^{r_{k1}}pk_{S}^{r_{k2}}/pk_{S}^{r_{k2}}=g^{r_{k1}}$ and $E_{k1}/(g^{r_{k1}})^{\alpha}=(p+1)^{\widehat{\nabla}_{k}}$.}. In this situation, besides the normal execution of the protocol, the server calculates the following values without the awareness of $\mathcal{C}_{i}$ and $\mathcal{C}_{j}$. 
    \begin{equation*}
    \left\{\begin{aligned}
        E_{i\& j} & = (p+1)^{\widehat{\nabla}_{i}} \beta^{r_{i1}} \cdot  (p+1)^{\widehat{\nabla}_{j}} \beta^{r_{j1}}\\ 
        & =   (p+1)^{\widehat{\nabla}_{i}+\widehat{\nabla}_{j}}\cdot g^{\alpha(r_{i1}+r_{j1})} \mod{p^{2}}, \\ 
        d_{i\& j}&=g^{r_{i1}}\cdot(pk_{S})^{r_{i2}} \cdot g^{r_{j1}}\cdot(pk_{S})^{r_{j2}}
        \\ 
        &=g^{r_{i1}+r_{j1}}(pk_{S})^{r_{i2}+r_{j2}}=g^{r_{i1}+r_{j1}}(g^{\sum\nolimits_{\mathcal{C}_k\in S}sk_{k}})^{r_{i2}+r_{j2}}\\ 
        T_{i\& j}&=T_i\cdot T_j = R^{sk_i} \cdot R^{sk_j}=(g^{\sum\nolimits_{\mathcal{C}_k\in S}r_{k2}})^{sk_i+sk_{j}}. \\    
    \end{aligned}
    \right.
\end{equation*}
Obviously, $g^{r_{i1}+r_{j1}}$ cannot be deduced from $\frac{d_{i\& j}}{T_{i\& j}}$. Thus, the server also needs to  
calculate the following values using $\sum_{\mathcal{C}_k\in S/\{\mathcal{C}_{i}, \mathcal{C}_{j}\}}sk_{k}$ and $\sum_{\mathcal{C}_k\in S/\{\mathcal{C}_{i}, \mathcal{C}_{j}\}}r_{k2}$: 
\begin{equation*}
  \left\{\begin{aligned}
    d^{*}_{i\& j}&= d_{i\& j} \cdot (g^{r_{i2}}\cdot g^{r_{j2}})^{-\sum_{\mathcal{C}_k\in S/\{\mathcal{C}_{i}, \mathcal{C}_{j}\}}sk_{k}}\\
    &=g^{r_{i1}+r_{j1}}g^{(sk_{i}+sk_{j})(r_{i2}+r_{j2})} \\
    T^{*}_{i\& j} &= T_{i\& j} \cdot (pk_{i}\cdot pk_{j})^{-\sum_{\mathcal{C}_k\in S/\{\mathcal{C}_{i}, \mathcal{C}_{j}\}}r_{k2}}\\
    &=g^{(sk_{i}+sk_{j})(r_{i2}+r_{j2})}.
  \end{aligned} 
  \right.
\end{equation*}    
Thereby, the server can compute $d^{*}_{i\& j}/T^{*}_{i\& j}=g^{r_{i1}+r_{j1}}$. With the secret $\alpha$, the server calculates $E_{i\& j}/(g^{r_{i1}+r_{j1}})^{\alpha}=(p+1)^{\widehat{\nabla}_{i}+\widehat{\nabla}_{j}}$. Obviously, these colluded participants can only obtain the aggregated result $\widehat{\nabla}_{i}+\widehat{\nabla}_{j}$  rather than the individual gradient $\widehat{\nabla}_{i}$ or $\widehat{\nabla}_{i}$. Therefore, our scheme is resistant to collusion between the server and up to $N-2$ clients.  
\end{proof}

\subsection{Robustness in tolerating dropped clients}
In this part, we demonstrate the robustness of our scheme in tolerating dropped clients, as described in Theorem \ref{the:fault}.
\begin{theorem}\label{the:fault}
    Our proposed scheme is capable of achieving robustness in tolerating up to $N-2$ dropped clients.
\end{theorem}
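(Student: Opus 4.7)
The plan is to show that (i) the protocol executes correctly and privately for any online set $S$ with $|S|\geq 2$, and (ii) the bound $|S|\geq 2$ is the natural minimum, which translates into tolerance of at most $N-2$ dropouts. The key structural observation is that every quantity used in each iteration is indexed by the current online set $S$: the aggregated public key $pk_{S}=\prod_{\mathcal{C}_{i}\in S}pk_{i}$, the challenge $R=g^{\sum_{\mathcal{C}_{i}\in S}r_{i2}}$, the combined response $T=\prod_{\mathcal{C}_{i}\in S}T_{i}$, and the aggregate ciphertext $(E_{Agg},d)$ all depend only on the online clients. Dropped clients contribute to none of these products, so their absence never obstructs the computation, and the server can simply recompute $pk_{S}$ from the long-lived public keys $pk_{i}$ registered once in $\mathbf{KeyGen}$.

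For correctness with any $|S|\geq 2$, I would inspect $\mathbf{Dec}$ directly: the cancellation $d/T=g^{\sum_{\mathcal{C}_{i}\in S}r_{i1}}$ holds because both numerator and denominator are indexed by exactly $S$, so $E_{Agg}\cdot(d/T)^{-\alpha}\equiv(p+1)^{\sum_{\mathcal{C}_{i}\in S}\widehat{\nabla}_{i}}\pmod{p^{2}}$, and Algorithm \ref{alg:alg1} recovers the correct $\nabla=\sum_{\mathcal{C}_{i}\in S}\nabla_{i}$ exactly as in Remark \ref{remark}. Privacy is likewise inherited: the DDH-based reductions in Theorems \ref{the:security}, \ref{the:aggregation}, and \ref{the:collusion} were already phrased in terms of an arbitrary online set $S$, so they go through verbatim without requiring $|S|=N$; in particular, the collusion resilience of Theorem \ref{the:collusion} specializes to ``at most $|S|-2$ online colluders'' inside the online population.

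The remaining ingredient, and the step I expect to be the most delicate, is justifying that $N-2$ (rather than $N-1$) is optimal. When $|S|=1$ the server's check $pk_{S}=pk_{i}$ in Section \ref{subsec:training} triggers and the lone online client declines to upload, because any meaningful aggregate would then equal that client's own gradient and trivially violate the privacy of its local training data. Hence $|S|\geq 2$ is both necessary and sufficient, which is exactly the claim that up to $N-2$ dropouts can be tolerated. A minor point worth noting is that the per-iteration randomness $(r_{i1},r_{i2})$ is refreshed every round and is independent of the membership of $S$, so a dropping-and-rejoining adversary cannot splice ciphertexts across rounds to enlarge its view beyond what Theorems \ref{the:security}--\ref{the:collusion} already cover.
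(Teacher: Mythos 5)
Your proposal is correct and follows essentially the same route as the paper's own proof: correctness of aggregation and decryption for any online set $S$ with $|S|\geq 2$ (reducing to Remark \ref{remark}), combined with an analysis of the degenerate case $|S|=1$ --- where the lone client's gradient would be exposed and the $pk_{S}=pk_{i}$ check aborts the round --- to justify that the tolerable number of dropouts is $N-2$ rather than $N-1$. The only difference is cosmetic: the paper explicitly writes out the server's decryption of $\widehat{\nabla}_{1}$ in the $|S|=1$ case, while you state the privacy violation more briefly and add side remarks on inherited DDH security and per-round randomness.
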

\begin{proof}
     As outlined in Section \ref{subsec:training}, the server needs to initially identify the clients who can participate in the model training and decryption, i.e., determine the set $S$ of online clients. Subsequently, only the clients in $S$ (i.e., online clients) will perform the subsequent model training and aggregation. Therefore, based on the correctness of recovering the aggregated gradient $\nabla$ (see the Remark \ref{remark}), our proposed scheme is robust for the dropped clients. 
     Next, we explain why the up-bound number of dropped clients is $N-2$ instead of $N-1$. Consider the case where the number of clients in set 
      $S$ is $1$ (i.e., $N-1$ clients cannot participate in the current iteration of model training), without loss of generality, let $S=\{\mathcal{C}_{1}\}$, then the aggregated public key is $pk_{S}=pk_{1}=g^{sk_{1}}$.  In this scenario, only $\mathcal{C}_{1}$ uploads the ciphertext $E(\widehat{\nabla}_{1})=(E_{11}, E_{12})$. 
      The server then generates $R=g^{r_{12}}$ and forwards it to $\mathcal{C}_{1}$. following which $\mathcal{C}_{1}$ computes $T_{1}=R^{sk_{1}}=g^{r_{12}\cdot sk_{1}}$ and sends it to the server. In this case, the server can obtain the aggregated result as:    
     \begin{equation*}
    \left\{\begin{aligned}
        E_{Agg} & =E_{11}= (p+1)^{\widehat{\nabla}_{1}}\cdot g^{\alpha r_{11}} \mod{p^{2}} \\ 
        d&=g^{r_{11}}\cdot(pk_{S})^{r_{12}}=g^{r_{11}}\cdot g^{sk_{1} \cdot r_{12}} \\ 
        T&=T_{1}=g^{r_{12}\cdot sk_{1}}.
    \end{aligned}
    \right.
\end{equation*}
The server can obtain $\widehat{\nabla}_{1}$, which is the local gradient of $\mathcal{C}_{1}$, by computing $L(E_{Agg}\cdot (d/T)^{-\alpha} \mod{p^{2}})$ with the private key $\alpha$. In other words, if the number of clients in set $S$ is $1$, then the corresponding client's local gradient will be exposed. That's why we need each client $\mathcal{C}_{i}$ to check whether $pk_{S}=pk_{i}$ holds after receiving the aggregated public key $pk_{S}$ (see Section \ref{subsec:training}). If it holds, $\mathcal{C}_{1}$ will discontinue the following operations for privacy preservation. As a result, we need to limit the up-bound number of dropped clients to $N-2$ to preserve the privacy of the local gradient of a single client.  
\end{proof}

\begin{table*}[htb]
\begin{center}
\caption{Comparison of Security for secure Aggregation Schemes.}\label{tab:secure_compa}
\begin{threeparttable}
\begin{tabular}{c|ccccc}
\toprule
  & Scheme \cite{BonawitzIKMMPRS17} &Scheme \cite{SoNYL0AGA22} & Scheme \cite{PhongAHWM18} &Scheme \cite{FangGHMFY21}&Our scheme \\
  \midrule
  Without trusted third party participation &\checkmark &\checkmark & \ding{53} & \ding{53}& \checkmark\\
  Client-defined private key & \checkmark & \checkmark & \ding{53} & \checkmark & \checkmark\\
  Confidentiality of individual users & \checkmark & \checkmark&\checkmark &\checkmark &\checkmark \\
   Decryption right of the aggregated result & \ding{53}& \ding{53} &Clients & \ding{53}& Server\\
   Privacy guarantee against colluding clients & $t-1$ & $t-1$& \ding{53} & $t-1$& $N-2$ \\
   Robustness against dropped clients & $N-t$& $N-t$&$N-2$ & $N-t$& $N-2$\\
 \bottomrule
\end{tabular}
        \begin{tablenotes}
         \item As analyzed in \cite{BonawitzIKMMPRS17}, it is best to set $t \geq \lfloor \frac{2n}{3}+1\rfloor$ for the privacy consideration, and thus the schemes \cite{BonawitzIKMMPRS17, SoNYL0AGA22, FangGHMFY21}  can only tolerate a relatively small number of dropped clients.
       \end{tablenotes}
\end{threeparttable}
\end{center}
\end{table*}

\subsection{Comparison} 
In this section, we list a comparison of the privacy and robustness in tolerating dropped clients of the state-of-the-art secure aggregation schemes in Table \ref{tab:secure_compa}, where \checkmark and \ding{53} indicate satisfaction and dissatisfaction, respectively. Before discussing the comparison result, we briefly introduce these five schemes. Schemes \cite{BonawitzIKMMPRS17} and \cite{SoNYL0AGA22} are designed using the $(t, n)$-threshold secret sharing technique, while both schemes \cite{PhongAHWM18} and our scheme are based on the additive homomorphic encryption technique. Scheme \cite{FangGHMFY21} is introduced by combining these two techniques, where the $(t, n)$-threshold secret sharing technique and the additive homomorphic encryption technique are used for private key distribution and secure additive aggregation, respectively. 

First of all, since the primary goal of secure aggregation in federated learning is to safeguard the local gradient, all schemes naturally ensure the confidentiality of individual clients. 

For the two secret sharing-based schemes \cite{BonawitzIKMMPRS17} and \cite{SoNYL0AGA22}, the $(t, n)$-threshold secret sharing technique enables clients to select their own private keys and distribute the shares of the private key to other clients, allowing both schemes to satisfy the properties of no trusted third-party participation and client-defined private key. Additionally, as long as $t$ encrypted gradients are received, the aggregated result can be obtained. Thus, neither of the two schemes considers the decryption right of the aggregated result, meaning that anyone who obtains $t$ encrypted gradients can recover the aggregated result. However, as we know, $(t, n)$-threshold secret sharing-based schemes must balance the trade-off between privacy guarantee against colluding clients (i.e., up to $t-1$ colluding clients) and robustness against dropped clients (i.e., up to $N-t$ clients). That is, these schemes increase the privacy guarantee by reducing the robustness against dropped clients and vice versa.

For two homomorphic encryption-based schemes \cite{PhongAHWM18} and \cite{FangGHMFY21}, scheme \cite{PhongAHWM18} requires the selection of a private key $sk$ that is shared among all clients. The simplest way to accomplish this is to introduce a trusted third party. Similar to scheme \cite{PhongAHWM18}, scheme \cite{FangGHMFY21} selects a private key $sk$ during system initialization and computes the corresponding $N$ shares $\{sk_{1}, sk_{2}, \ldots, sk_{N}\}$ using the secret sharing technique, and then distribute $sk_{i}$ to the client $\mathcal{C}_{i}$. 
To ensure privacy, a trusted third party must be introduced to complete this operation. As a result, both schemes cannot satisfy the property of no trusted third-party participation. Since all clients share a private key, scheme \cite{PhongAHWM18} cannot satisfy the property of the client-defined private key. Additionally, similar to \cite{BonawitzIKMMPRS17} and \cite{SoNYL0AGA22}, scheme \cite{FangGHMFY21} cannot satisfy the decryption right of the aggregated result while facing the trade-off between privacy guarantee against colluding clients and robustness against dropped clients.
In \cite{FangGHMFY21}, only clients have the shared private key, and thus only clients can decrypt the aggregation result, meaning that the decryption right of the aggregated result is the clients. However, due to the shared private key, \cite{FangGHMFY21} cannot tolerate collusion between the server and clients. This means that if one client is compromised, the privacy of other clients cannot be guaranteed. Fortunately, this scheme can achieve robustness against up to $N-2$ dropped clients, which is more flexible than the other three schemes.

As described in Section \ref{sec:scheme}, the system initialization of our scheme is conducted by the server, and clients are free to choose their own public and private keys. Thus, our scheme satisfies the properties of no trusted third-party participation and client-defined private key. As analyzed in Section \ref{sec:security}, the Theorem \ref{the:aggregation} demonstrates that only the server can decrypt the aggregated result, which implies that the decryption right of the aggregation result belongs to the server. Furthermore, the Theorems \ref{the:collusion} and \ref{the:fault} show that our scheme can withstand up to $N-2$ colluded clients and tolerate up to $N-2$ dropped clients. Most notably, compared with the secret sharing-based schemes \cite{BonawitzIKMMPRS17, SoNYL0AGA22, FangGHMFY21}, our scheme does not face the trade-off between privacy and dropout-tolerant robustness.

\begin{figure*}[htbp]    
  \centering           
  \subfloat[The accuracy result on MNIST]  
  {
      \label{convergence_mnist}\includegraphics[width=0.33\textwidth]{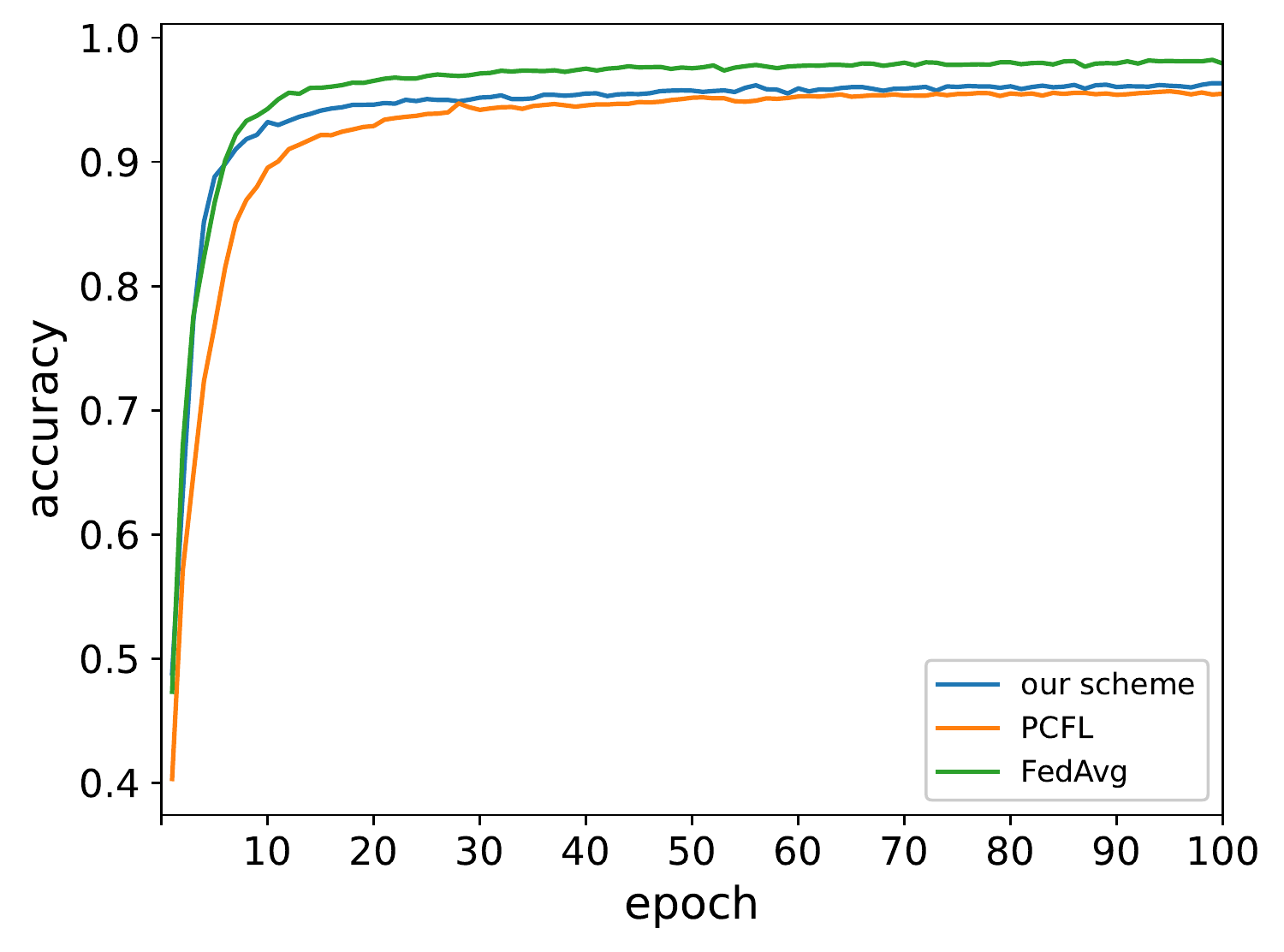}
  }
  \subfloat[The accuracy result on Cifar10]
  {
      \label{convergence_cifar10}\includegraphics[width=0.33\textwidth]{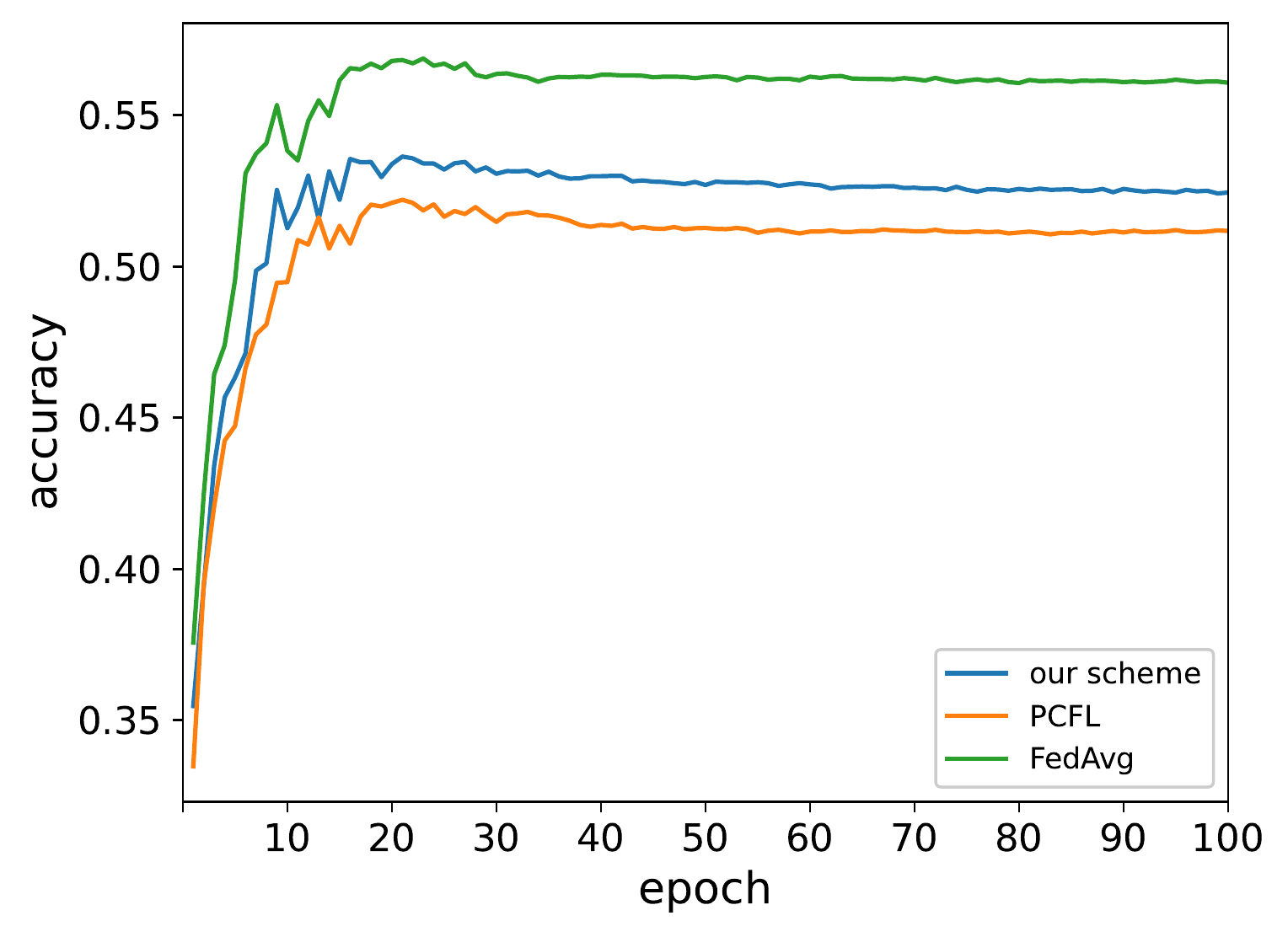}
  }
  \subfloat[The accuracy result on Cifar100]
  {
      \label{convergence_cifar100}\includegraphics[width=0.33\textwidth]{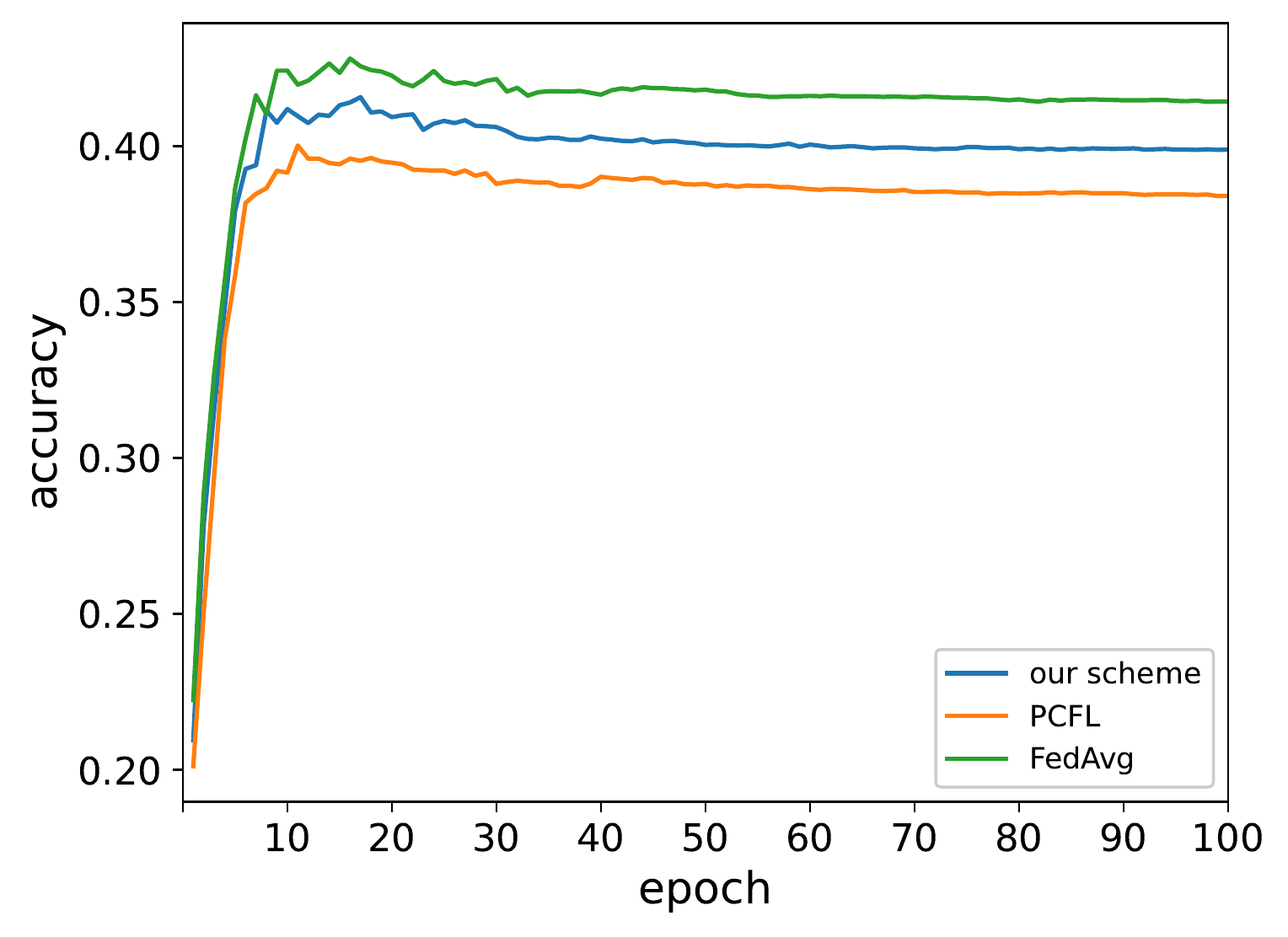}
  }
  \caption{The accuracy result on different datasets}    
  \label{convergence_rate}            
\end{figure*}

\section{Performance Evaluation}\label{sec:perfor}
In this section, we analyze our scheme through simulation and compare it with the most related work \cite{FangGHMFY21} (referred to as PCFL), as well as a non-private federated learning scheme (i.e., FedAvg \cite{McMahanMRHA17}), in terms of model accuracy, computational costs, and communication overhead. We conduct experiments using the Torch libraries \cite{PaszkeGMLBCKLGA19} on an Ubuntu system equipped with an Intel Xeon CPU and TITAIXP GPU. In the simulation, we establish a federated learning system with $10$ clients to train a CNN model using the MNIST dataset, Cifar10 dataset, and Cifar100 dataset, respectively. Prior to model upload, each client trains the local model for $5$ rounds with a learning rate of $0.01$. In our experiments, we set $\kappa_{1}=512$ bits and $\kappa_{2}=256$ bits for security requirements.

\subsection{Comparison of training accuracy}
We compare the training accuracy of our scheme to the PCFL \cite{FangGHMFY21} and FedAvg \cite{McMahanMRHA17} under different epochs in Fig. \ref{convergence_rate}. The figure shows that all three schemes converge almost within the same epoch, but the training accuracies for both our scheme and PCFL \cite{FangGHMFY21} on MNIST, Cifar10, and Cifar100 are lower than the non-private federated learning scheme FedAvg \cite{McMahanMRHA17}. It is worth noting that neither our scheme nor PCFL \cite{FangGHMFY21} theoretically loses accuracy since homomorphic encryption and secret sharing techniques are lossless privacy-preserving methods. However, by observing the figure, we can conclude that both our scheme and PCFL \cite{FangGHMFY21} do experience some loss in training accuracy. This is because local gradients are real numbers, while privacy-preserving methods are performed in an integer domain. In order to implement privacy-preserving methods successfully in secure aggregation, real numbers must be converted to integers, which inevitably results in precision loss during the rounding of real numbers \cite{MohasselZ17}. Fortunately, this accuracy loss can be minimized by designing an appropriate conversion method, which we plan to discuss in our future work.

\subsection{Comparison of computational costs}
We depict the comparison of computational costs in Fig. \ref{running_time}. From the figure, it is evident that as the number of gradients increases (i.e., ranging in size from 10,000 to 40,000), the running times of both our schemes and PCFL \cite{FangGHMFY21} gradually increase and are larger than that of FedAvg \cite{McMahanMRHA17}. Additionally, compared to PCFL \cite{FangGHMFY21}, our scheme achieves almost 2x improvement in computational costs. It is widely acknowledged that the introduction of privacy preservation inevitably sacrifices computational efficiency, which explains why the running times of both secure aggregation schemes are greater than that of non-private FedAvg \cite{McMahanMRHA17}. Notably, our scheme introduces a super-increasing sequence to compress multidimensional gradients into 1-D, significantly reducing the number of encryption and decryption operations. In essence, our scheme performs $u=\lceil\frac{n}{k}\rceil\in [1, n]$ \footnote{Note that $u$ is usually much less than $n$ since our scheme allows larger message space, resulting in a larger split interval $k$.} encryption and decryption operations, while PCFL \cite{FangGHMFY21} has to perform $n$ encryption and decryption operations.
Consequently, the computational efficiency of our scheme is much better than that of PCFL \cite{FangGHMFY21}.

\begin{figure*}[htbp]    
  \centering           
  \subfloat[Running time of each round on MNIST]  
  {
      \label{time_mnist}\includegraphics[width=0.33\textwidth]{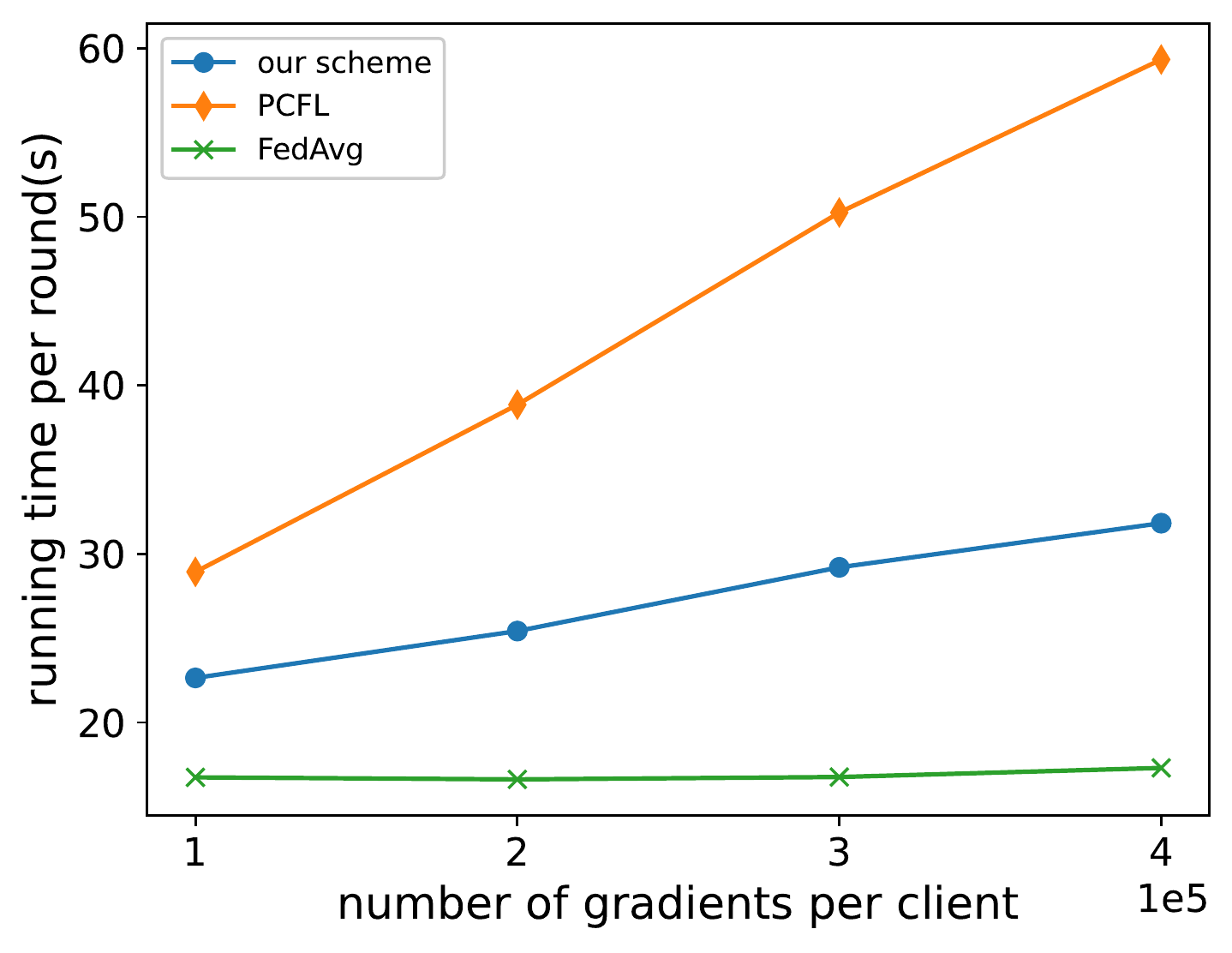}
  }
  \subfloat[Running time of each round on Cifar10]
  {
      \label{time_cifar10}\includegraphics[width=0.33\textwidth]{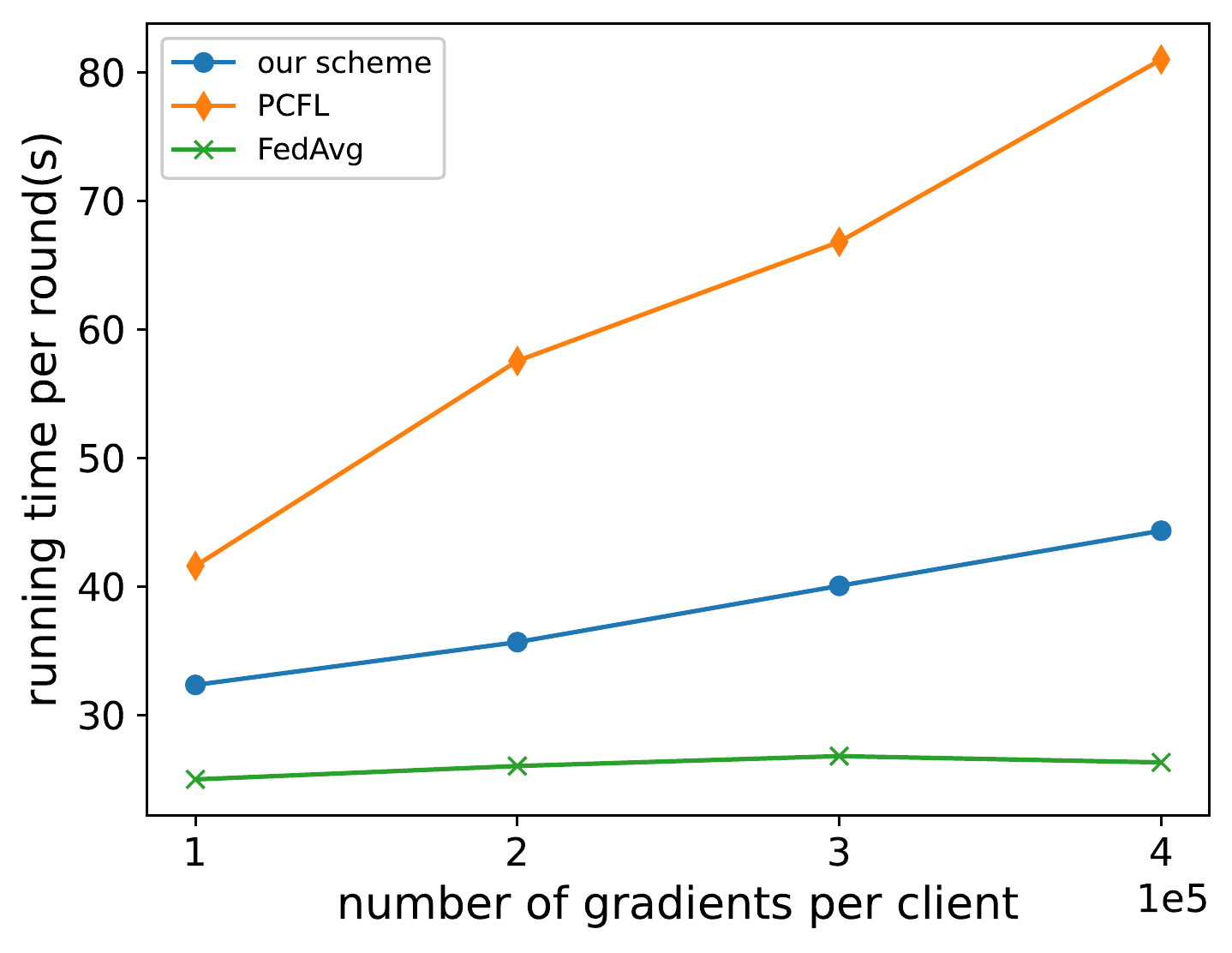}
  }
  \subfloat[Running time of each round on Cifar100]
  {
      \label{time_cifar100}\includegraphics[width=0.33\textwidth]{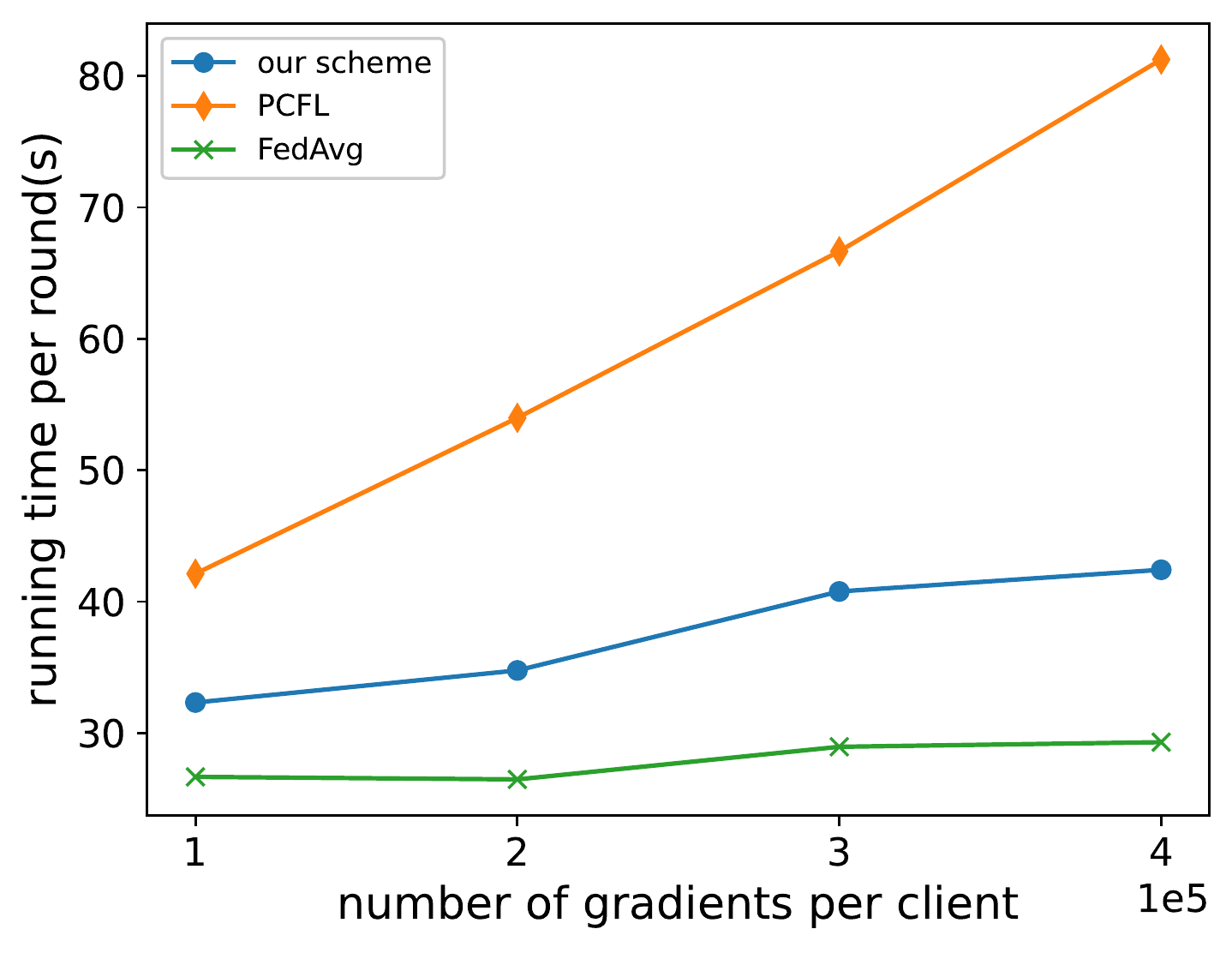}
  }
  \caption{Running time of each round}    
  \label{running_time}            
\end{figure*}

\subsection{Comparison of communication overhead}
We illustrate the comparison of communication overhead in Fig. \ref{communcation_overhead}, which demonstrates that the communication overheads of both our scheme and the non-private FedAvg \cite{McMahanMRHA17} are nearly equal and much smaller than that of PCFL \cite{FangGHMFY21}. Furthermore, the communication advantage of our scheme 
becomes even more apparent as the number of model gradients increases. The main reason also owns to the introduction of the super-increasing sequence and the expansion of the plaintext space that can be processed for encryption and decryption. 

More specifically, because of the expansion of the plaintext space, the dimension that can be compressed is larger, i.e., $k$ becomes larger. Thus, when both $n$ and $k$ increase, the growth rate of our scheme's ciphertext (i.e., the number of ciphertexts is $\lceil\frac{n}{k}\rceil$ ) is evidently lower than that of PCFL \cite{FangGHMFY21} (i.e., the number of ciphertexts is $n$).

 Overall, the efficiency advantages of our scheme will become increasingly prominent as the number of model parameters increases, as compared to the state-of-the-art homomorphic encryption-based secure aggregation schemes.

\begin{figure*}[htbp]    
  \centering           
  \subfloat[Communication overhead on MNIST]  
  {
      \label{communication_mnist}\includegraphics[width=0.33\textwidth]{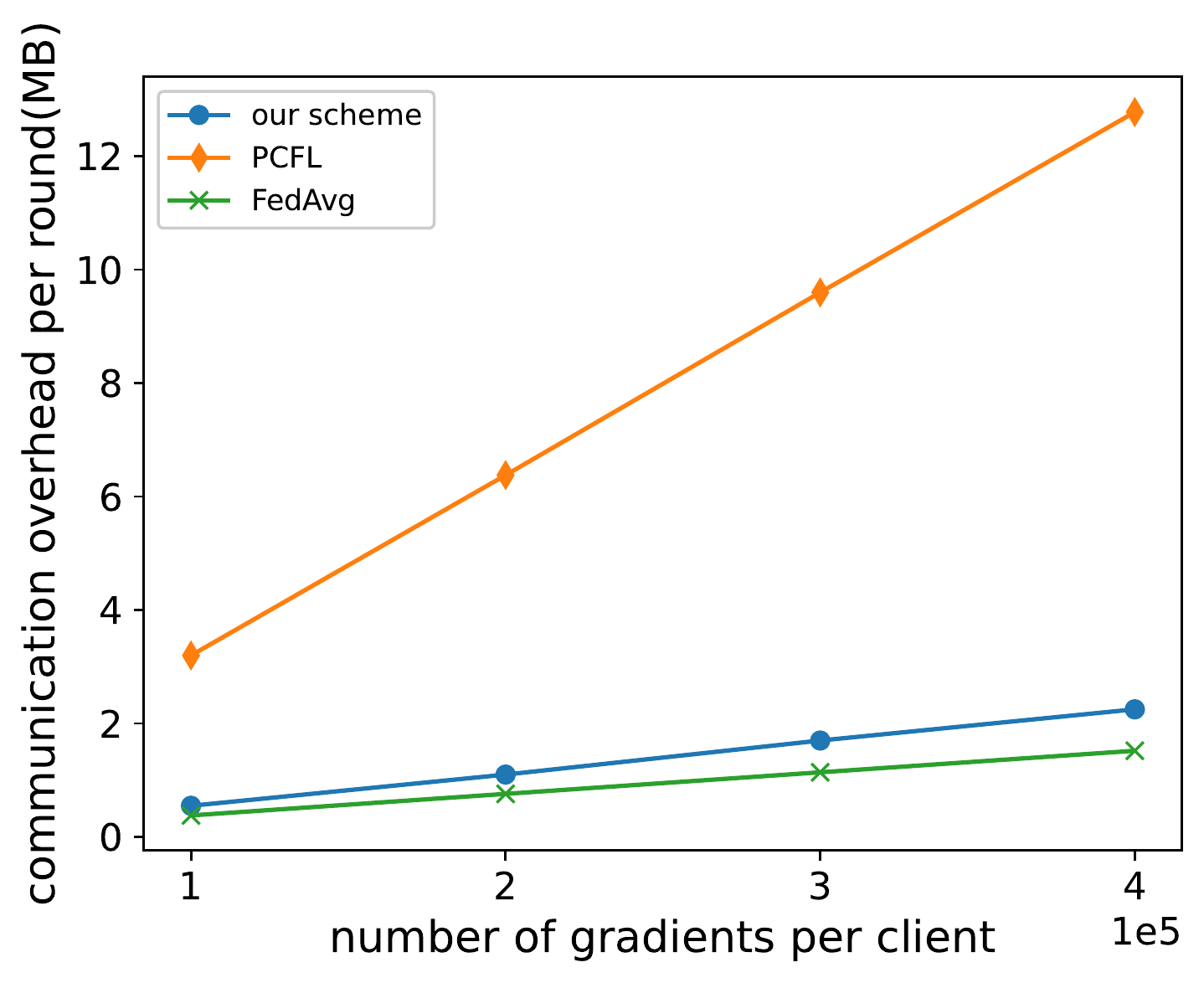}
  }
  \subfloat[Communication overhead on Cifar10]
  {
      \label{tcommunication_cifar10}\includegraphics[width=0.33\textwidth]{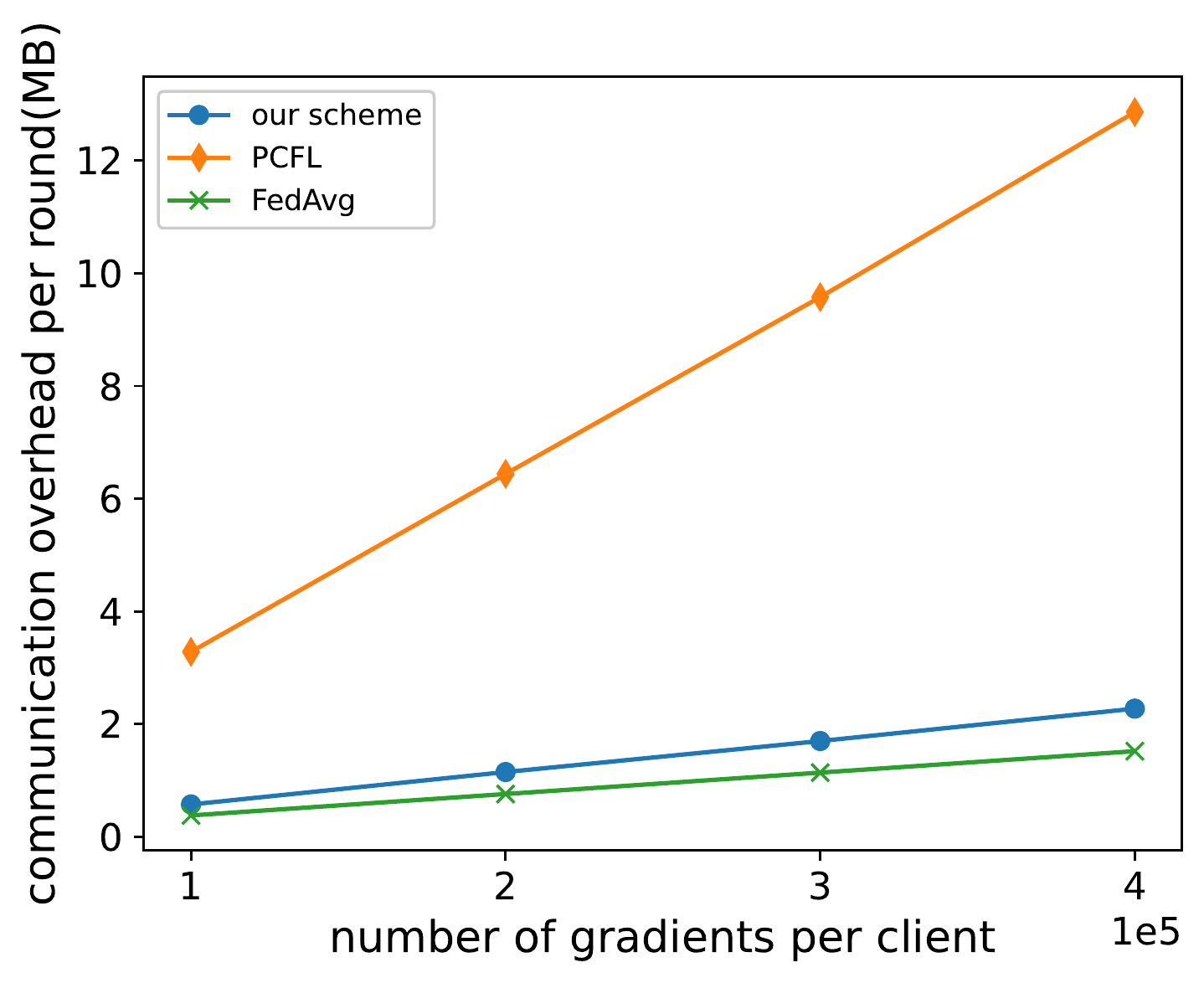}
  }
  \subfloat[Communication overhead on Cifar100]
  {
      \label{communication_cifar100}\includegraphics[width=0.33\textwidth]{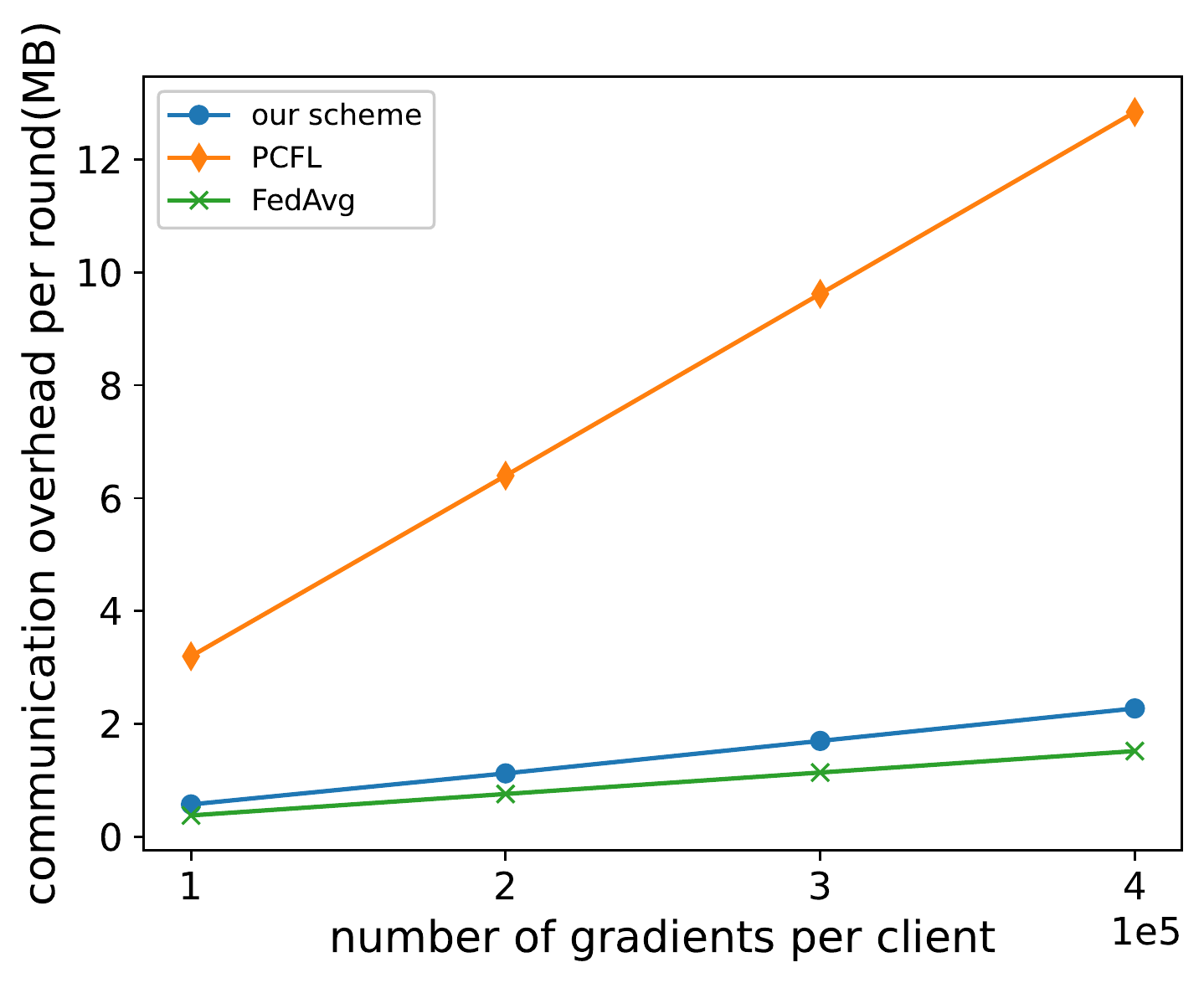}
  }
  \caption{Communication overhead of each round}    
  \label{communcation_overhead}            
\end{figure*}

\section{Related Work}\label{sec:related}
The concept of federated learning was first introduced by Google in 2016 \cite{KonecnyMYRSB16} as a solution to privacy leaks for distributed data providers. The central idea of federated learning is that, under the orchestration of a central server, numerous distributed data providers (also referred to as ``client'') collaborate locally to train a shared global model without uploading local training data to  the data center \cite{ShokriS15}. The most popular implementation of federated learning is based on iterative model averaging, known as FedAvg \cite{McMahanMRHA17}. In FedAvg, many clients locally train the model with local data to obtain the local gradient and send it to the server. The server then aggregates (e.g., calculates the weighted average of local gradients) all received local gradients to update the global model. Subsequently, various federated learning schemes \cite{ZhangXBYLG21} and implementations \cite{FATE, TensorFlow} have been presented based on the FedAvg. However, recent studies \cite{ZhuLH19, GeipingBD020} have proved that federated learning still suffers from privacy leakage. For example, \cite{ZhuLH19} showed that the adversary is capable of reconstructing the training data from the transmitted local gradient. 

To enhance the privacy preservation of each client's training data, numerous secure aggregation schemes for federated learning have been researched \cite{PhongAHWM18, PhongA0WM17, HaoLXLY19, ChaiWCY21, ZhangLX00020, FangGHMFY21, BonawitzIKMMPRS17}. The core concept behind these schemes was for each client to encrypt the local gradient before uploading it, with the server conducting the aggregation on the encrypted gradients. Consequently, the server could obtain no information other than aggregated results. More specifically, these secure aggregation schemes were primarily designed based on two cryptographic techniques: additive homomorphic encryption (HE) technique \cite{Paillier99} and $(t, n)$-threshold secret sharing technique \cite{Shamir79}. Unfortunately, these existing secure aggregation schemes had numerous drawbacks, which significantly hindered their practical application. For example, in the additive HE-based secure aggregation schemes \cite{PhongAHWM18, PhongA0WM17, HaoLXLY19, ChaiWCY21, ZhangLX00020}, all clients shared a pair of public and private keys, and encrypted local gradient with the same public key. Clearly, once a client was compromised by the adversary, this system was no longer secure. To overcome this disadvantage, \cite{FangGWJ20, HaoLLXYL20, FangGHMFY21} considered assigning different keys to different clients, allowing each client to encrypt the local gradient with its own public key using the same additive HE algorithm. These schemes employed the $(t, n)$-threshold secret sharing technique to assign different private keys to different clients, given a system private key. As long as more than $t$ clients' encrypted gradients were aggregated, the server could obtain the aggregated result. 
 However, these existing HE-based schemes required a trusted third party to complete key generation, and suffered from a significant computational cost and communication overhead, particularly for deep models with a large number of parameters.

To achieve better efficiency, several secure aggregation schemes \cite{BonawitzIKMMPRS17, DongCSW20, XuLL0L20} have solely relied on the $(t, n)$-threshold secret sharing technique. In these schemes, each client randomly selected two secret values as private keys and created additive shares for them. They subsequently distributed these additive shares among each other and used them to mask the local gradient for privacy preservation. Indeed, these schemes allowed for a maximum of $N-t$ dropped clients or $t-1$ colluded clients, which creates a trade-off between security and dropout-resiliency guarantee. Furthermore, when no more than $N-t$ dropped clients were present, the remaining online clients would have to upload additional secret shares to aid the server in recovering aggregated gradients. Unfortunately, each client was required to store $2N$ secret additive shares. Furthermore, when two secret values were chosen differently for different iterations, all clients were compelled to compute the corresponding additive shares and transmit them to one another. This operation would obviously increase the extra interactive time and communication overhead. Additionally, these secret sharing-based schemes did not consider the privacy preservation of the aggregated result. In other words, any adversary (not just the server) could obtain the aggregated result once the transmitted data was obtained.

To the best of our knowledge, it is still a challenge to develop an efficient secure aggregation scheme that offers support for client self-selected keys, robustness against collusion attacks, and robustness in tolerating dropped clients.

\section{Conclusion}\label{sec:conc}
This paper introduces an efficient and multi-private key secure aggregation scheme for federated learning. Unlike most homomorphic encryption-based schemes, our scheme does not rely on a trusted third party to initialize the system. We have skillfully designed a secure interactive protocol that enables additive homomorphic operation when clients are free to select their own public/private keys, which greatly enhances the security of the entire system. Additionally, we have thoughtfully designed encryption/decryption operations and combined them with the super-increasing vector to expand the plaintext space, thus reducing the number of encryption/decryption operations and the number of ciphertexts. Detailed security analyses show that our scheme achieves semantic security of both individual local gradients and the aggregated result, while achieving optimal robustness in tolerating client collusion and dropped clients. Extensive performance evaluations on three popular datasets demonstrate that our scheme outperforms existing competing schemes in terms of computational and communication efficiencies.

\bibliographystyle{IEEEtran}
\bibliography{ref}

\begin{IEEEbiography}[{\includegraphics[width=1in,height=1.25in,clip,keepaspectratio]{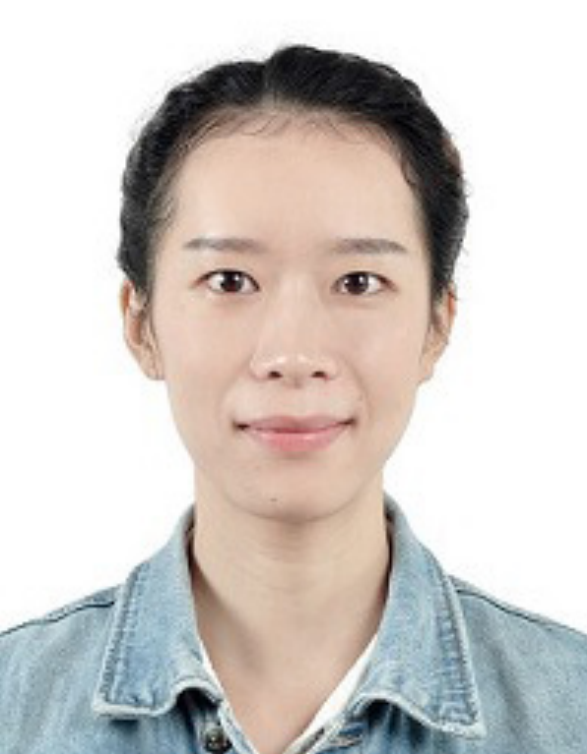}}]{Xue Yang} received the Ph.D. degree in information and communication engineering from Southwest Jiaotong University, China, in 2019. She was a visiting student at the Faculty of Computer Science, University of New Brunswick, Canada, from 2017 to 2018. She was a postdoctoral fellow with the Tsinghua Shenzhen International Graduate School from 2019 to 2021. She is currently a research associate with the School of Information Science and Technology, Southwest Jiaotong University, China. Her research interests include big data security and privacy, applied cryptography, and federated learning. 
\end{IEEEbiography}

\begin{IEEEbiography}[{\includegraphics[width=1in,height=1.25in,clip,keepaspectratio]{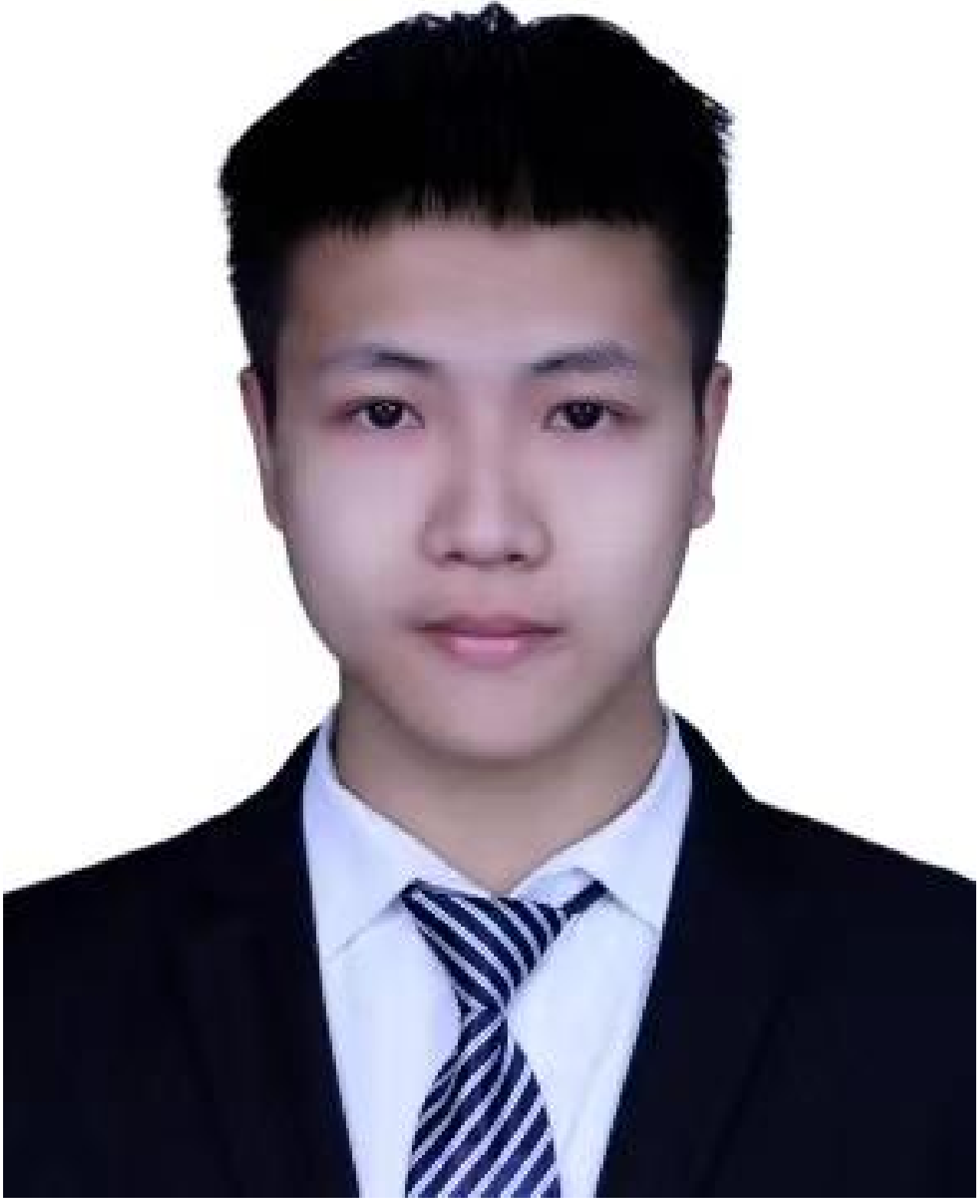}}]{Zifeng Liu} received the B.E. degree in communication engineering from Southwest Jiaotong University, Chengdu, China, in 2021. He is currently studying at the School of Information Science and Technology, Southwest Jiaotong University. His main research focuses on applied cryptography, federated learning, big data security, and privacy. 
\end{IEEEbiography}
\begin{IEEEbiography}[{\includegraphics[width=1in,height=1.25in,clip,keepaspectratio]{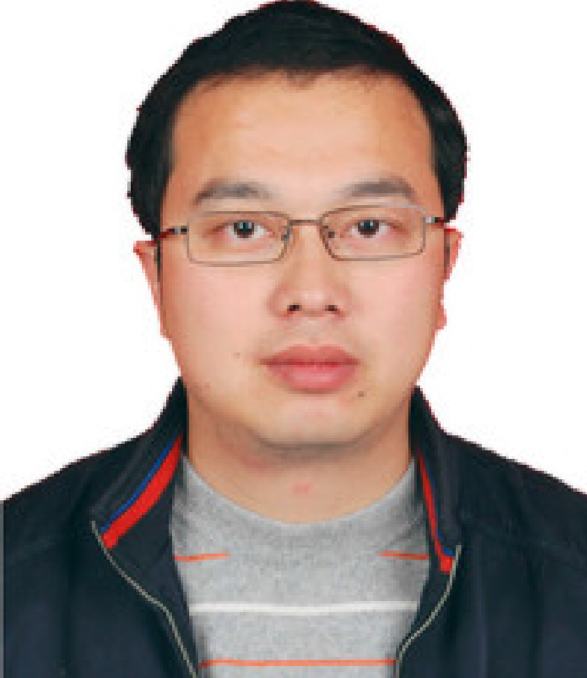}}]{Xiaohu Tang} (Senior Member, IEEE)  received the B.S. degree in applied mathematics from the Northwestern Polytechnical University, Xi'an, China, the M.S. degree in applied mathematics from the Sichuan University, Chengdu, China, and the Ph.D. degree in electronic engineering from the Southwest Jiaotong University, Chengdu, China, in 1992, 1995, and 2001 respectively. 

From 2003 to 2004, he was a research associate in the Department of Electrical and Electronic Engineering, Hong Kong University of Science and Technology. From 2007 to 2008, he was a visiting professor at University of Ulm, Germany. Since 2001, he has been in the School of Information Science and Technology, Southwest Jiaotong University, where he is currently a professor. His research interests include coding theory, network security, distributed storage, and information processing for big data.

Dr. Tang was the recipient of the National Excellent Doctoral Dissertation
Award in 2003 (China), the Humboldt Research Fellowship in 2007
(Germany), and the Outstanding Young Scientist Award by NSFC in 2013
(China). He served as Associate Editor for several journals including \textit{IEEE Transactions on Information Theory} and \textit{IEICE Transactions on Fundamentals}, and served on a number of technical program committees of conferences.
\end{IEEEbiography}
\begin{IEEEbiography}[{\includegraphics[width=1in,height=1.25in,clip,keepaspectratio]{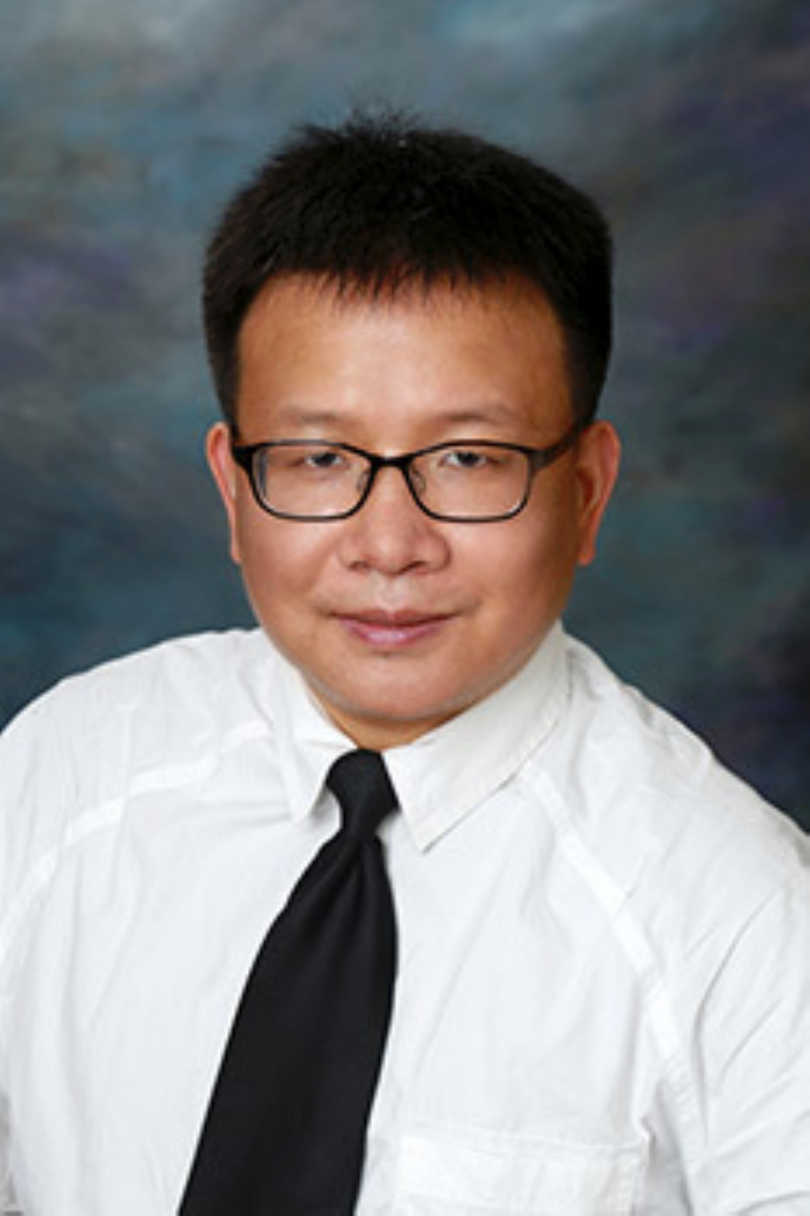}}]{Rongxing Lu} (Fellow, IEEE) received the Ph.D. degree from the Department of Electrical and Computer Engineering, University of Waterloo, Canada, in 2012. He worked as a Post-Doctoral Fellow with the University of Waterloo from May 2012 to April 2013. He is currently a Mastercard IoT Research Chair and an Associate Professor with the Faculty of Computer Science (FCS), University of New Brunswick (UNB), Canada. Before that, he worked as an Assistant Professor with the School of Electrical and Electronic Engineering, Nanyang Technological University (NTU), Singapore, from April 2013 to August 2016. His research interests include applied cryptography, privacy-enhancing technologies, and IoT-big data security, and privacy. He also serves as the Chair of IEEE ComSoc CISTC, and the Founding Co-Chair of IEEE TEMS Blockchain and Distributed Ledgers Technologies Technical Committee (BDLT-TC). 
\end{IEEEbiography}
\begin{IEEEbiography}[{\includegraphics[width=1in,height=1.25in,clip,keepaspectratio]{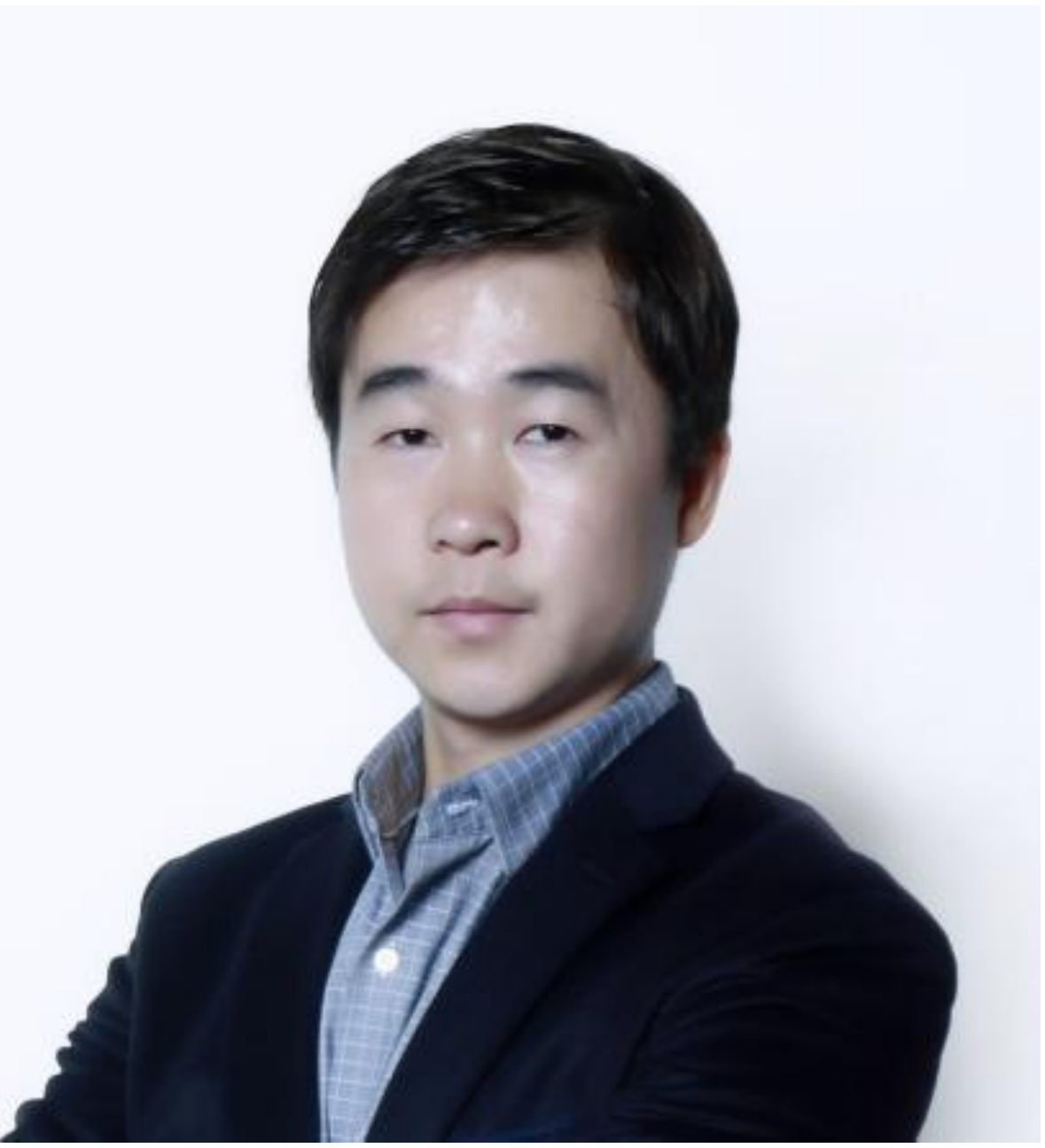}}]{Bo Liu} received the B.E. degree from Zhejiang University, Hangzhou, China, in 2006, and the Ph.D. degree in computer science from the University of Maryland, College Park, MD, USA, in 2012. He is currently the CTO of the DBAPPSecurity Company Ltd., Hangzhou. He is also the Deputy Director of the Information Security Research Center, Zhejiang Laboratory, Hangzhou, and the Deputy Director of the Joint Research Center between Zhejiang University and DBAPPSecurity Company Ltd. His research interests include cybersecurity situational awareness, data security, privacy computing, and machine learning, where he has published more than 20 technical papers with more than 8000 international citations. Dr. Liu won the leading scientific and technological achievements at the World Internet Conference in 2019.
\end{IEEEbiography}



\end{document}